\documentclass[lettersize,journal]{IEEEtran}

\usepackage{amsmath,amssymb,amsfonts,amsthm,mathtools,bm}
\usepackage{array}
\usepackage{booktabs}
\usepackage{graphicx}
\usepackage{xcolor}
\usepackage{cite}
\usepackage{stfloats}
\usepackage{url}
\usepackage[caption=false,font=footnotesize]{subfig} 
\usepackage{microtype}
\usepackage{comment}

\usepackage[colorlinks,
linkcolor=blue,       
anchorcolor=blue,     
citecolor=blue,       
]{hyperref}

\hyphenation{op-tical net-works semi-conduc-tor IEEE-Xplore}

\newtheorem{theorem}{Theorem}
\newtheorem{proposition}{Proposition}
\newtheorem{lemma}{Lemma}
\newtheorem{corollary}{Corollary}

\theoremstyle{definition}

\newtheorem{definition}{Definition}
\newtheorem{assumption}{Assumption}

\theoremstyle{remark}
\newtheorem{remark}{Remark}


\newcommand{\ms}{\mathsf}

\newcommand{\BE}{\begin{equation}}
	\newcommand{\EE}{\end{equation}}
\newcommand{\BS}{\begin{subequations}}
	\newcommand{\ES}{\end{subequations}}

\newcommand{\dd}{\mathrm{d}}
\newcommand{\ii}{\mathrm{i}}
\newcommand{\RR}{\mathbb{R}}

\newcommand{\bs}[1]{\boldsymbol{#1}}

\newcommand{\half}{p_{\tiny{\ms{half}}}}
\newcommand{\aps}{\rho}

\newcommand{\diag}{\mathrm{diag}}

\begin{document}

\title{Continuous Angular Power Spectrum Recovery From Channel Covariance via Chebyshev Polynomials}

\author{Shengsong Luo,
        Ruilin Wu,
        Chongbin Xu,~\IEEEmembership{Member,~IEEE,}
        Junjie Ma,
        \\
        Xiaojun Yuan,~\IEEEmembership{Fellow,~IEEE,}
        and Xin Wang,~\IEEEmembership{Fellow,~IEEE,}
\thanks{S. Luo, C. Xu, and X. Wang are with the College of Future Information Technology, Fudan University, Shanghai 200433, China (e-mails: \{22110720120\}@m.fudan.edu.cn, \{chbinxu, xwang11\}@fudan.edu.cn).}
\thanks{R. Wu is with the School of Mathematical Sciences, Peking University, Beijing 100871, China (e-mail: rlwu22@stu.pku.edu.cn).}
\thanks{J. Ma is with the State Key Laboratory of Mathematical Sciences, Academy of Mathematics and Systems Science, Chinese Academy of Sciences, Beijing 100190, China (e-mail: majunjie@lsec.cc.ac.cn).}
\thanks{X. Yuan is with the National Key Laboratory of Wireless Communications, University of Electronic Science and Technology of China, Chengdu 611731, China (e-mail: xjyuan@uestc.edu.cn).}
}



\maketitle

\begin{abstract}
This paper proposes a Chebyshev polynomial expansion framework for the
recovery of a continuous angular power spectrum (APS) from channel covariance. By exploiting the orthogonality of
Chebyshev polynomials in a transformed domain, we derive an
exact series representation of the covariance and reformulate
the inherently ill-posed APS inversion as a finite-dimensional linear
regression problem via truncation. The associated approximation error
is directly controlled by the tail of the APS’s Chebyshev series and
decays rapidly with increasing angular smoothness.
Building on this representation, we derive an exact semidefinite
characterization of nonnegative APS and introduce a
derivative-based regularizer that promotes smoothly varying APS profiles while preserving transitions of clusters. 
Simulation results show
that the proposed Chebyshev-based framework yields accurate APS
reconstruction, and enables reliable
downlink (DL) covariance prediction from uplink (UL) measurements in a frequency division duplex (FDD) setting. These findings indicate that jointly exploiting smoothness and nonnegativity in a Chebyshev domain provides an effective tool for covariance-domain processing in multi-antenna systems.
\end{abstract}

\begin{IEEEkeywords}
Continuous angular power spectrum (APS), Chebyshev polynomial, channel covariance, FDD.
\end{IEEEkeywords}

\section{Introduction}
\IEEEPARstart{T}{his} paper addresses the problem of recovering the continuous angular power spectrum (APS) from channel covariance.
In multi-antenna/massive-MIMO systems, the APS of the impinging wavefield provides a compact and physically meaningful description of the scattering environment. A reliable estimate of the APS underpins a variety of signal processing and communication tasks, including DL channel covariance conversion in frequency division systems (FDD)~\cite{Miretti-2021,Miretti-2018,Haghighatshoar-2018,Barzegar-2019,Bameri-2023}, channel estimation~\cite{fan2018angle, cui2022channel}, beamforming design~\cite{mohammadzadeh2022covariance,shakya2025angular}, pilot decontamination~\cite{Cavalcante-2020}, power allocation~\cite{kelner2019evaluation}, user grouping~\cite{haghighatshoar2016massive}, and angle-aware localization and sensing~\cite{fischer2025systematic}, among others.

For antenna arrays such as uniform linear array (ULA), the spatial covariance is determined by the APS through a linear integral relation~\cite{sayeed2002deconstructing, haghighatshoar2016massive}, and is statistically stationary over a period of time~\cite{Miretti-2021,Haghighatshoar-2018,Bameri-2023,You-2015}. This has led to a body of work on recovering continuous APS from channel covariance.

A widely used strategy discretizes the angular domain and approximates the continuous integral model by a finite dictionary on a grid, leading to a nonnegative least-squares (NNLS) formulation~\cite{Haghighatshoar-2018, Barzegar-2019,Cavalcante-Renato-2020,song2020deep}. This class of methods is simple to implement, and nonnegativity can be enforced directly on the discretized spectrum. However, the discretization inherently introduces modeling mismatch, and the reconstruction quality can be sensitive to the chosen resolution.

A particularly influential and pioneering line of work, which also provides a main motivation for the present study, is due to
Miretti \emph{et al.}~\cite{Miretti-2021,Miretti-2018}. 
Their projection-onto-a-linear-variety (PLV) framework avoids angular gridding by working directly with continuous APS models. 
PLV treats the APS as an element of $L^2$ space and imposes the covariance constraints by projecting the origin onto an infinite-dimensional affine (linear-variety) set determined by the covariance coefficients. 
Building on this viewpoint, APS nonnegativity is incorporated via projection onto the infinite-dimensional nonnegative cone, yielding an iterative projection scheme, that is the extrapolated alternating projection method (EAPM)~\cite{Bauschke2006Extrapolation}. 
Nevertheless, the EAPM, as an extension of PLV, operates in an \emph{infinite-dimensional} space, still requires fine gridding and burdensome quadrature, and relies on \emph{trigonometric} bases with implicit periodic extension (refer to Proposition~\ref{prop:PLV-equivalent}), which may cause boundary instability when APS components lie near the edges.

More recently, some kernel-based APS estimators have been introduced~\cite{Agrawal-2021,Ninomiya-2022,Kaneko-2024} (often through heuristic Gaussian-mixture models for clustered spectra), which heuristically incorporate the APS smoothness. Yet, its role in modeling error is rarely made explicit in a functional-analytic sense.

We also note that estimating a \emph{continuous} APS from a finite set of covariance lags is an inherently ill-posed inverse problem~\cite{Cavalcante-Renato-2020,Kaneko-2024}: Different angular distributions can generate nearly indistinguishable covariances, especially with a limited number of antennas or closely spaced clusters. To the best of our knowledge, this ill-posedness has not been characterized from an analytical standpoint in the existing literature.
 
 
To address these limitations, we revisit continuous APS recovery from a functional-analytic viewpoint motivated by ideas from functional linear regression~\cite{Interpretable-FLR-2009,morris2015functional}, and develop a framework based on Chebyshev polynomial expansions on a finite interval. By mapping the APS to a \emph{weighted} $L^2$ space on $[-1,1]$ and expanding it in an orthogonal Chebyshev basis~\cite{TrefethenATAP,Plonka2023Chebyshev}, we obtain an exact series representation of the covariance sequence in terms of Chebyshev coefficients. Truncating this series yields a finite-dimensional linear regression model that cleanly separates the array geometry from the APS parametrization and explicitly ties the approximation error to the tail of the Chebyshev series, and hence to the smoothness of the APS (cf.~\cite[Theorem 6.12]{Plonka2023Chebyshev}).
{Moreover, this representation makes transparent how the ill-posedness depends on the number of antennas and the APS regularity.}
Building on this representation, we then incorporate additional structure in a convex and analytically tractable way. First, drawing on classical results on nonnegative polynomials~\cite[Theorem. 1.21.1]{szego2003orthogonal}, we derive an exact semidefinite characterization of nonnegative APS in the Chebyshev domain, leading to a semidefinite programming (SDP) formulation that enforces nonnegativity without angular gridding. 
{Second, inspired by analysis-type sparse regularization~\cite{Elad2007,Stefan2010,bredies2010total,chan2000high}, we design a multi-order finite-difference regularizer in the angular domain that promotes smoothly varying APS profiles while preserving transitions of clusters.}

The main \emph{contributions} of our work are as follows:
\begin{itemize}
	\item \emph{Chebyshev-based representation and its advantages:}
	We introduce a Chebyshev polynomial expansion of a transformed APS on a finite interval, under which the covariance admits a linear model in the Chebyshev coefficients. This representation decouples the array geometry from the APnS parametrization and enables convex covariance fitting. {Moreover, it provides a unified functional-analytic view of PLV/EAPM: While classical PLV/EAPM schemes rely on trigonometric bases with an implicit periodic extension of the APS, our Chebyshev formulation yields (i) explicit truncation-error control via Chebyshev tail bounds for smooth APS, and (ii) avoidance of periodic-extension-induced oscillations.}

    \item 
    \emph{Analytical nonnegativity characterization:}
    We derive an exact semidefinite characterization of nonnegative APS in
    the Chebyshev domain via a structured matrix parametrization of
    the coefficients. This leads to a convex covariance-fitting
    formulation that enforces nonnegativity without angular gridding.
    Arguably, it is more effective than the prior EAPM~\cite{Miretti-2021,Miretti-2018} and NNLS~\cite{Haghighatshoar-2018,Barzegar-2019,Cavalcante-Renato-2020,song2020deep} approaches, which enforce nonnegativity on a (fine) discrete grid.
	
	\item
    \emph{Regularization of analysis-prior:}
    We propose an interpretable regularization framework for APS recovery under limited aperture.
    Our key idea is to encode the clustered nature of APS profiles by combining multi-high-order angular differences into a window-like operator. Penalizing the output of this operator with an $\ell_1$ norm suppresses ripples while retaining a few localized transitions (cf. $\ell_1$ trend-filtering in~\cite{kim2009ell_1}), yielding spectra that are smoothly varying yet preserve transitions of clusters. 

\end{itemize}

{We note that the introduced analysis-prior is shape-aware (sensitive to curvature-like changes), and integrates naturally with physical APS nonnegativity, resulting in a convex and numerically efficient estimator. The numerical results validate the effectiveness of physically plausible APS recovery and its application pertaining to DL channel covariance conversion in a realistic FDD setting.}


\subsection{Preliminary}
Let $I\subseteq\mathbb R$ be a Lebesgue-measurable set and $\mathrm dx$ be the Lebesgue measure. We introduce the following preliminaries.

\begin{definition}
	\label{def:weighted-L2}
	\textit{(Weighted $L^{2}$ space)}.
	Let $w: I\to(0,\infty)$ be a Lebesgue-measurable weight function with $w>0$ almost everywhere.
	For measurable functions $f,g:I\to\mathbb R$, define the weighted inner product
	\(
	\langle f, g \rangle_{w} := \int_{I} f(x)\, g(x)\, w(x)\, \mathrm{d}x,
	\)
	whenever it is well-defined.
	Define the weighted $L^{2}$ space
	\begin{align*}
		L^{2}_{w}( I)
		:= & \bigl\{\, f: I\to\mathbb R \ \big|\ \langle f,f\rangle_{w} < \infty \,\bigr\} ,
	\end{align*}
	with the inner product $\langle \cdot,\cdot\rangle_w$ and norm
	\(
	\|f\|_{w}:=\sqrt{\langle f, f\rangle_{w}}.
	\)
\end{definition}

\begin{definition}
\label{def:projection}
\emph{(Projection Operator).}
Let $\mathcal{S}\subset L^2_w$ be a nonempty closed convex set. The
\emph{projection} of $f\in L^2_w$ onto $\mathcal{S}$ is the (unique)
element $\mathbb{P}_{\mathcal{S}}(f)\in\mathcal{S}$ that solves
\begin{equation}
  \mathbb{P}_{\mathcal{S}}(f)
  := \arg\min_{z\in\mathcal{S}} \,\|f - z\|_{w}^2.
\end{equation}
The mapping $\mathbb{P}_{\mathcal{S}}:L^2_w\to L^2_w$ is called the
\emph{projection operator} onto $\mathcal{S}$. 
In particular, if $\mathcal{S}$ is a closed subspace, then $\mathbb{P}_{\mathcal{S}}$
is the orthogonal projection onto $\mathcal{S}$.
\end{definition}

\begin{lemma}
	\label{def:Chebyshev-Polynomials}
	\textit{(Chebyshev Polynomials~\cite[Sec. 6.1]{Plonka2023Chebyshev}).} The Chebyshev polynomial of degree $n$ is defined by
	\(
	T_n(x) := \cos( n \arccos(x) ),  x \in [-1,1] .
	\)
	The sequence $\{T_n\}_{n\ge0}$ forms a {complete orthogonal basis} for $L^2_w([-1,1])$~\cite[Theorem~6.3]{Plonka2023Chebyshev}, with respect to the inner product $\langle f_1, f_2 \rangle_w$, where the weight function is $w(x) = (1 - x^2)^{-1/2}$.
	That is,
	\begin{equation}
		\label{eq:OrthoPoly}
		\langle T_m, T_n \rangle_{w} =
		\begin{cases}
			\pi & \text{if  } m=n= 0, \\
			\frac{\pi}{2} \delta_{mn} & \text{otherwise}.
		\end{cases}
	\end{equation}
\end{lemma}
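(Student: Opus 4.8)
The plan is to transport everything to the interval $[0,\pi]$ via the substitution $x=\cos\theta$, under which the weighted inner product on $[-1,1]$ becomes the ordinary $L^2$ inner product on $[0,\pi]$ and the Chebyshev polynomials become the cosine functions. Concretely, for $\theta\in[0,\pi]$ one has $\sin\theta\ge 0$, so $w(\cos\theta)=(1-\cos^2\theta)^{-1/2}=1/\sin\theta$ and $\mathrm dx=-\sin\theta\,\mathrm d\theta$; hence $w(x)\,\mathrm dx=\mathrm d\theta$ after reversing the integration limits. Defining $U:L^2_w([-1,1])\to L^2([0,\pi],\mathrm d\theta)$ by $(Uf)(\theta):=f(\cos\theta)$, this change of variables gives $\langle f,g\rangle_w=\int_0^\pi (Uf)(\theta)(Ug)(\theta)\,\mathrm d\theta$, i.e. $U$ is an isometric isomorphism onto $L^2([0,\pi])$ (inverse $h\mapsto h(\arccos(\cdot))$), and by definition $(UT_n)(\theta)=\cos(n\theta)$.

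First I would check that $T_n$ is genuinely a polynomial of degree $n$, so that the claimed basis lies in the polynomial subspace. This follows by induction from the three-term recurrence $T_{n+1}(x)=2xT_n(x)-T_{n-1}(x)$ — which is just the product-to-sum identity $\cos((n+1)\theta)+\cos((n-1)\theta)=2\cos\theta\cos(n\theta)$ written in the $x$ variable — together with $T_0\equiv 1$ and $T_1(x)=x$; one reads off $\deg T_n=n$ (leading coefficient $2^{n-1}$ for $n\ge1$), so $\{T_0,\dots,T_N\}$ spans the polynomials of degree $\le N$. The orthogonality relations are then immediate from the isometry: $\langle T_m,T_n\rangle_w=\int_0^\pi\cos(m\theta)\cos(n\theta)\,\mathrm d\theta$, and the elementary identity $\cos(m\theta)\cos(n\theta)=\tfrac12[\cos((m+n)\theta)+\cos((m-n)\theta)]$ yields $\pi$ when $m=n=0$, $\tfrac\pi2$ when $m=n\ge1$, and $0$ when $m\ne n$, which is exactly \eqref{eq:OrthoPoly}.

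It remains to establish completeness, i.e. that the closed linear span of $\{T_n\}_{n\ge0}$ is all of $L^2_w([-1,1])$. Since $U$ is an isometric bijection carrying $T_n$ to $\cos(n\cdot)$, this is equivalent to completeness of the cosine system $\{\cos(n\theta)\}_{n\ge0}$ in $L^2([0,\pi])$, which is the classical half-range cosine expansion (equivalently, extend evenly to $[-\pi,\pi]$ and invoke completeness of the trigonometric system). Alternatively, and self-containedly: the measure $w(x)\,\mathrm dx$ has finite total mass $\int_{-1}^1(1-x^2)^{-1/2}\,\mathrm dx=\pi<\infty$, so $C([-1,1])$ is dense in $L^2_w([-1,1])$; by the Weierstrass approximation theorem polynomials are uniformly dense in $C([-1,1])$, hence dense in $L^2_w([-1,1])$; and by the degree count every polynomial lies in $\mathrm{span}\{T_n\}$. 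Thus the closed span contains a dense set and equals $L^2_w([-1,1])$.

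The routine parts are the change of variables and the cosine integrals; the only genuine input is the completeness step, where one must either quote completeness of the trigonometric/cosine system or assemble the Weierstrass-plus-density argument — the point to watch being that integrability of the weight near the endpoints $x=\pm1$ (the singularities of $(1-x^2)^{-1/2}$) is precisely what lets uniform approximation transfer to $L^2_w$ convergence.
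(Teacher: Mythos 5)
Your proof is correct and complete; the paper itself offers no proof of this lemma, simply citing \cite[Sec.~6.1, Theorem~6.3]{Plonka2023Chebyshev}, so there is no in-paper argument to compare against. Your route --- the substitution $x=\cos\theta$ turning $\langle\cdot,\cdot\rangle_w$ into the plain $L^2([0,\pi])$ inner product, product-to-sum identities for the orthogonality constants, the three-term recurrence for $\deg T_n=n$, and completeness via the half-range cosine system (or Weierstrass plus finiteness of the measure $w\,\dd x$) --- is exactly the standard textbook argument the citation points to, and your remark that integrability of the weight at $x=\pm1$ is what lets uniform approximation upgrade to $\|\cdot\|_w$ convergence is the right point to flag.
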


\begin{lemma}
	\label{prop:convergence}
	(\textit{Chebyshev Series}\cite[Theorem 3.1]{TrefethenATAP}). If $f$ is Lipschitz continuous on $[-1,1]$~\cite[Sec. 4.14]{Apostol}, it has a unique representation as a Chebyshev series, i.e., 
	\begin{subequations}
	\begin{align}
		\label{eq:c}
		f(x) &= \sum_{n=0}^{\infty} a_n T_n(x), \\
		a_n &= \frac{2-\delta_{n0}}{\pi}\langle f, T_n \rangle_w, 
	\end{align}
	\end{subequations}
	with $f$ being absolutely and uniformly convergent.
\end{lemma}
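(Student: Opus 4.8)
The plan is to transplant the problem to the unit circle, where it becomes a statement about Fourier cosine series, and then invoke classical convergence results for Fourier series of Lipschitz functions. Substituting $x=\cos\theta$ with $\theta\in\mathbb R$, I set $F(\theta):=f(\cos\theta)$. Since $\cos$ maps $\mathbb R$ onto $[-1,1]$ and is globally $1$-Lipschitz (from $|\cos\theta_1-\cos\theta_2|=2\,|\sin\tfrac{\theta_1-\theta_2}{2}|\,|\sin\tfrac{\theta_1+\theta_2}{2}|\le|\theta_1-\theta_2|$), the composition $F$ is well-defined on all of $\mathbb R$, is $2\pi$-periodic and even, and inherits a Lipschitz bound $|F(\theta_1)-F(\theta_2)|\le L\,|\cos\theta_1-\cos\theta_2|\le L\,|\theta_1-\theta_2|$, where $L$ is the Lipschitz constant of $f$ on $[-1,1]$. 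Under this substitution $T_n(\cos\theta)=\cos(n\theta)$, and after the change of variables $w(x)\,\mathrm dx\mapsto\mathrm d\theta$ the candidate coefficients $a_n=\tfrac{2-\delta_{n0}}{\pi}\langle f,T_n\rangle_w$ become precisely the Fourier cosine coefficients of $F$ on $[0,\pi]$.

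Next I would invoke the classical fact (Bernstein's theorem) that a periodic function that is Hölder continuous of order $\alpha>\tfrac12$ --- in particular, one that is Lipschitz --- has an absolutely convergent Fourier series that sums to it pointwise; since the trigonometric monomials are bounded by $1$, absolute convergence immediately upgrades to uniform convergence. Applying this to the even function $F$ gives $F(\theta)=\sum_{n\ge0}a_n\cos(n\theta)$ with $\sum_{n\ge0}|a_n|<\infty$, uniformly in $\theta$. Transplanting back via $\theta=\arccos x$ and using $|T_n(x)|\le1$ on $[-1,1]$ yields $f(x)=\sum_{n\ge0}a_nT_n(x)$, absolutely and uniformly convergent on $[-1,1]$.

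Finally, for uniqueness: uniform convergence permits integrating the series term by term against $T_m(x)\,w(x)$ on $[-1,1]$, and the orthogonality relations of Lemma~\ref{def:Chebyshev-Polynomials} then force the coefficients of any Chebyshev-series representation of $f$ to equal $a_m=\tfrac{2-\delta_{m0}}{\pi}\langle f,T_m\rangle_w$, which is exactly the stated formula; hence the representation is unique. The main obstacle in a fully self-contained argument is the absolute-convergence step: Lipschitz regularity alone only yields the non-summable decay $a_n=O(1/n)$, so one genuinely needs the sharper Bernstein-type estimate (bounding $\sum|a_n|$ by summing $L^2$ norms over dyadic frequency blocks controlled by the modulus of continuity of $F$), or else one invokes it directly as in \cite[Theorem~3.1]{TrefethenATAP}.
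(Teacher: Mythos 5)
Your argument is correct, and there is nothing in the paper to compare it against: Lemma~\ref{prop:convergence} is imported verbatim from \cite[Theorem~3.1]{TrefethenATAP} and the paper supplies no proof of its own. Your sketch is essentially the standard proof of that cited theorem --- transplant to the circle via $x=\cos\theta$, identify the $a_n$ with the Fourier cosine coefficients of the even $2\pi$-periodic Lipschitz function $F$, invoke absolute convergence of its Fourier series, and recover uniqueness by term-by-term integration against $T_m w$ --- and you correctly isolate the one nontrivial step, namely that the crude $a_n=O(1/n)$ decay is not summable. One small simplification worth noting: for genuinely Lipschitz $F$ you do not need the full dyadic-block Bernstein estimate, since $F$ is absolutely continuous with $F'\in L^\infty\subset L^2$, so Parseval gives $\sum_{n\ge1} n^2|a_n|^2<\infty$ and Cauchy--Schwarz then yields $\sum_{n\ge1}|a_n|\le\bigl(\sum_{n\ge1} n^{-2}\bigr)^{1/2}\bigl(\sum_{n\ge1} n^2|a_n|^2\bigr)^{1/2}<\infty$ directly.
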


\begin{definition}
\label{def:zeros-nodes}
	(\textit{Zero Nodes of Chebyshev Polynomials}~\cite{Plonka2023Chebyshev}) 
    The degree-$N$ Chebyshev polynomial $T_N$ defined in Lemma~\ref{def:Chebyshev-Polynomials}, possesses $N$ simple zeros such that
    \(
        T_N(\nu_n) = 0
    \)
    with
    \(
    \nu_n = \cos \frac{2n+1}{2N} \pi,  n=0,1,\ldots,N-1.
    \)
\end{definition}

\emph{Organization}.
The remainder of this paper is organized as follows. Section~\ref{sec:system-model}
introduces the channel covariance model and problem formulation. Section~\ref{sec:cheb-bessel} derives
the Chebyshev-based covariance representation, presents the
truncated linear regression model, and revisits the PLV method
within the same functional setting. Section~\ref{sec:Interpretable} develops the
Chebyshev--SDP framework for nonnegative APS recovery and
incorporates a derivative-based regularizer. Section~\ref{Sec:simulation}
presents simulation results and Section~\ref{sec:conclusion} concludes the
paper.

\emph{Notation}. 
For $\bs{a} \in \mathbb{R}^n$,
$\bs{a}(j_1:j_2)$ denotes the subvector of $\bs{a}$ consisting of its
$j_1$-th to $j_2$-th entries, where $0 \le j_1 \le j_2 \le n-1$; in
particular, $\bs{a}[j]$ denotes the $j$-th entry of $\bs{a}$, and we
also write $a_j$ when no ambiguity can arise. For a matrix
$\bs{A} \in \mathbb{R}^{m \times n}$, $[\bs{A}]_{i,j}$ denotes its
$(i,j)$-th entry, and
$\bs{A}(i_1\!:\!i_2,\,j_1\!:\!j_2)$ denotes the submatrix formed by
rows $i_1$ to $i_2$ and columns $j_1$ to $j_2$, where
$0 \le i_1 \le i_2 \le m-1$ and $0 \le j_1 \le j_2 \le n-1$. 
We use $\mathcal{U}(a,b)$ to denote the uniform distribution on $[a,b]$.


\section{Covariance Model and Problem Description}
\label{sec:system-model}

In this section, we introduce the channel covariance model and formulate the APS recovery problem.

\subsection{Covariance Model}

We consider a single-cell uplink system with a base station (BS) equipped with a ULA of
$M$ antennas and single-antenna user equipment.
Let $\theta\in \Theta:=[-\pi/2,\pi/2]$ be the angle of arrival (AoA).
The ULA steering vector $\bm{a}(\theta) \in \mathbb C^{M \times 1}$ is
\begin{align}
	\label{eq:steering}
	\bs{a}(\theta)
	:=[
	1,
	e^{\ii \kappa_1 \sin\theta},
	\ldots,
	e^{\ii \kappa_{M-1}\sin\theta}
	]^T,
\end{align}
where $\kappa_m := \gamma \pi m$, $m \in \{0,1,\ldots,M-1\}$, with $\gamma=2d/\lambda_{f}$. Here, $d$ is the inter-element spacing of the ULA, and $\lambda_{f}$ is the wavelength of carrier frequency $f$.

We adopt a wide-sense stationary uncorrelated scattering model on angles \cite{You-2015, Haghighatshoar-2018,Bameri-2023}.
{Let $\rho: \Theta \rightarrow \mathbb R_{+}$ denote the APS,
which satisfies the following assumption:}
\begin{assumption}
	\label{assump:APS-continuity}
	\textit{(Regularity of APS).} $\aps$ is Lipschitz continuous. 
\end{assumption}

Following the pioneering works~\cite{Haghighatshoar-2018,Barzegar-2019,Miretti-2021, Bameri-2023}, the spatial covariance matrix can be written as 
\begin{equation}
	\label{eq:R_representation}
	\bs{R}_{}
	=
	\int_{\Theta}
	\rho(\theta)\,
	\bs{a}(\theta)
	\bs{a}(\theta)^H 
	\dd\theta.
\end{equation}
We note that $\bs{R} \in \mathbb C^{M \times M}$ is Hermitian Toeplitz and positive semi-definite. 
This property implies that $\bs{R}$ is determined by their first column $\bm{r}_{}$:
\begin{equation}
	\label{eq:r-vec-exp}
	\bm{r}_{} := [\bs{R}_{}]_{:,1} = \int_{\Theta}
	\rho(\theta)\,
	\bs{a}(\theta) \dd \theta,
\end{equation}
which fully captures the covariance information.

\subsection{Problem Description}

Our goal is to estimate $\rho(\theta)$ based on $\bm{r}$.
Using \eqref{eq:r-vec-exp}, for $\bm{r}=[r_0, r_1, \ldots, r_{M-1}]^\top \in \mathbb C^{M}$, we have,
\begin{equation}
	\label{eq:r_m_exp}
	\begin{aligned}
		r_m &= \int_{\Theta} \aps(\theta) \exp(\ii  \kappa_m \sin \theta) \dd \theta, \quad  m=0,\ldots, M-1.
	\end{aligned}
\end{equation}
Let $x=\sin \theta$ and define $g(x) := \rho(\arcsin x)$, we have
\begin{align}
	\label{eq:integral_g}
	r_m &= \int_{-1}^1 g(x)\,e^{\ii\kappa_m x}\,w(x)\,\dd x
	=\langle g, e^{\ii\kappa_m (\cdot)}\rangle_w,
\end{align}
with the Jacobian term
\(
w(x) = {1}/{\sqrt{1-x^2}}.
\)

We note that the Jacobian $w(x)$ from the change of variables coincides with the weight function of the Chebyshev polynomials.
And Assumption~\ref{assump:APS-continuity} guarantees that $g(x)$ is {Lipschitz continuous} and therefore admits a {Chebyshev polynomial expansion} (cf.~Lemma~\ref{prop:convergence}). 
Since $\rho$ and $g$ are in one-to-one correspondence, estimating $\rho$ is equivalent to estimating $g$; thus, the term ``APS'' is used interchangeably for both.

The mapping $\mathbf{r} \mapsto g$ is inherently ill-posed, as it requires estimating a continuous function from finite samples.
Leveraging Chebyshev modeling provides a structured/robust framework to address this challenge, offering the following advantages:
\begin{itemize}
	\item \emph{Series representation of $r$:} As shown in Section~\ref{sec:cheb-bessel}, properties of Chebyshev expansions and Bessel functions enable a compact series representation of the vector~$\bm{r}$.
	\item \emph{Basis representation of $g$:} The APS~$g$ can be expressed using orthogonal Chebyshev polynomials. Together with the series form of~$r$, this allows the APS recovery to be reformulated as a finite linear regression problem. 
	\item \emph{Oscillation handling:} The oscillatory term $\kappa_m \sin \theta$ in~\eqref{eq:r_m_exp} becomes part of the weight function in~\eqref{eq:integral_g}, yielding a stable Chebyshev representation.
\end{itemize}

\section{Covariance Representation via Truncated Chebyshev Expansion}
\label{sec:cheb-bessel}

In this section, using Chebyshev expansions, we obtain an explicit series form of the covariance and a truncated approximation that leads to a finite-dimensional linear model in the Chebyshev coefficients. 
We further relate this representation to the PLV formulation~\cite{Miretti-2021}.

\subsection{Covariance Representation and Truncation}
\label{subsec:cheb-trunc}


Consider the Chebyshev polynomial expansion of a Lipschitz continuous APS $g \in L^2_w$ (cf.~\eqref{eq:integral_g}):
\begin{align}
	\label{eq:cheb-expansion}
	g(x)=\sum_{n\geq0}^{} a_n\,T_n(x),
	\quad
\end{align}
in which $T_n(x) = \cos( n \arccos(x) )$,
\(
	a_n=\frac{2-\delta_{n0}}{\pi}\!\langle g, T_n \rangle_{w}
	\label{eq:cheb-coeff}
\)
(cf. Lemma~\ref{def:Chebyshev-Polynomials}).
With slight abuse of notation, we define $r(\kappa)$ as
\begin{equation}
	\label{eq:r-fun}
	r(\kappa) := \langle g, e^{\ii\kappa (\cdot)}\rangle_w, \quad I:=[-1,1].
\end{equation}
\begin{remark}
\label{remark:spatial-frequency-samples}
	The sequence $\{r_m\}_{m=0}^{M-1}$
	in \eqref{eq:r_m_exp} can be viewed as discrete samples of this function at points $\{\kappa_m\}_{m=0}^{M-1}$.
	In Theorem~\ref{theorem:CB-parseval}, it is shown that the function $r(\kappa)$ admits a series expansion using Chebyshev polynomials.
\end{remark}


\begin{theorem}
	\label{theorem:CB-parseval}
	If Assumption~\ref{assump:APS-continuity} holds, then $g \in L^2_w$ is Lipschitz continuous with Chebyshev expansion $g(x)=\sum_{n=0}^\infty a_n T_n(x)$, where $\{a_n\}$ is the Chebyshev coefficients of $g$. Recall the definition of $r(\kappa)$ in \eqref{eq:r-fun}.
	Then, the following holds true.
	
	\textbf{(1) Series representation of $r$.} Let $\ii = \sqrt{-1}$,
	\begin{equation}
		\label{eq:cheby-Bessel}
		r(\kappa)
		= \pi \sum_{n=0}^\infty \mathrm i^n a_n J_n(\kappa),
	\end{equation}
	where $J_n(\cdot)$ is the Bessel function of the first kind of order $n$. 
	
	\textbf{(2) Uniform bound on truncation error.}
	For $p \in\mathbb N_{>1}$,
	let $g_{\le p}(x):=\sum_{n=0}^p a_n T_n(x)$ be the $p$-th order Chebyshev sum. We denote
	\begin{equation}
		\label{eq:r-P-exp}
		r_{\le p}(\kappa) := \int_{I} g_{\le p}(x) e^{\ii\kappa x}\,w(x)\,\dd x 
		=\pi\sum_{n=0}^p \mathrm i^n a_n J_n(\kappa).
	\end{equation}
	The truncation error is uniformly (in $\kappa$) bounded by
	\begin{align}
		|r(\kappa)-r_{\le p}(\kappa) |
		\;\le\;
		\sqrt{\pi}\;\|\,g-g_{\le p}\,\|_{w},
		\quad \forall \kappa \in \RR,
		\label{eq:tail-function}
	\end{align}
	with 
    \(
		\|\,g-g_{\le p}\,\|_{w} = 
		\sqrt{{\pi}/{2}}
		\big(\sum_{n>p} a_n^2\big)^{\!1/2}.
    \)
\end{theorem}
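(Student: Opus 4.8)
The plan is to reduce all three claims to a single change-of-variables identity relating Chebyshev polynomials to Bessel functions; once that is in hand, Part~(1) is term-by-term integration of the Chebyshev series, Part~(2) is a finite linear combination of the same identity, and Part~(3) is Cauchy--Schwarz together with Parseval in $L^2_w$.

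For Part~(1) I would begin from $r(\kappa)=\int_{-1}^1 g(x)\,e^{\ii\kappa x}\,w(x)\,\dd x$ with $w(x)=(1-x^2)^{-1/2}$, insert the Chebyshev expansion $g=\sum_{n\ge0}a_nT_n$ from \eqref{eq:cheb-expansion}, and interchange summation and integration. The interchange is justified because Lemma~\ref{prop:convergence} gives absolute and uniform convergence of the series on $[-1,1]$; since $|T_n|\le1$ this forces $\sum_n|a_n|<\infty$, while $w\in L^1([-1,1])$ with $\int_{-1}^1 w\,\dd x=\pi$, so dominated convergence applies (equivalently, after $x=\cos\phi$ the series $g(\cos\phi)=\sum_n a_n\cos(n\phi)$ converges uniformly on the finite-measure interval $[0,\pi]$). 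It then remains to evaluate, for each $n\ge0$,
\[
  \int_{-1}^1 T_n(x)\,e^{\ii\kappa x}\,w(x)\,\dd x
  =\int_0^\pi \cos(n\phi)\,e^{\ii\kappa\cos\phi}\,\dd\phi
  =\pi\,\ii^{\,n}J_n(\kappa),
\]
which follows from the Jacobi--Anger expansion $e^{\ii z\cos\phi}=J_0(z)+2\sum_{k\ge1}\ii^{\,k}J_k(z)\cos(k\phi)$ and the orthogonality of $\{\cos(k\phi)\}_{k\ge0}$ on $[0,\pi]$. Summing over $n$ gives \eqref{eq:cheby-Bessel}; applying the same identity to the finite sum $g_{\le p}=\sum_{n=0}^p a_nT_n$ and using linearity of the integral gives $r_{\le p}(\kappa)=\pi\sum_{n=0}^p\ii^{\,n}a_nJ_n(\kappa)$, i.e.\ Part~(2), with no convergence concerns.

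For Part~(3) I would write $r(\kappa)-r_{\le p}(\kappa)=\int_{-1}^1\big(g(x)-g_{\le p}(x)\big)e^{\ii\kappa x}w(x)\,\dd x$, bound its modulus by $\int_{-1}^1|g-g_{\le p}|\,w\,\dd x$ using $|e^{\ii\kappa x}|=1$, and apply Cauchy--Schwarz in $L^2_w$ against the constant function $1$, obtaining $|r(\kappa)-r_{\le p}(\kappa)|\le\|g-g_{\le p}\|_w\,\|1\|_w=\sqrt{\pi}\,\|g-g_{\le p}\|_w$, uniformly in $\kappa$. Finally $g-g_{\le p}=\sum_{n>p}a_nT_n$, and since $\{T_n\}$ is a complete orthogonal basis of $L^2_w$ with $\|T_n\|_w^2=\pi/2$ for $n\ge1$ (Lemma~\ref{def:Chebyshev-Polynomials}; here $p\in\mathbb N_{>1}$, so $n>p$ forces $n\ge1$), Parseval's identity yields $\|g-g_{\le p}\|_w=\sqrt{\pi/2}\,\big(\sum_{n>p}a_n^2\big)^{1/2}$.

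The only substantive ingredient is the Chebyshev--Bessel integral identity together with the justification of term-by-term integration; I expect the full proof to spend most of its effort recalling the Jacobi--Anger expansion and verifying the interchange of sum and integral, the Cauchy--Schwarz and Parseval steps being routine.
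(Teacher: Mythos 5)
Your proposal is correct, and Part~(1) follows essentially the paper's route: both hinge on the identity $\int_{0}^{\pi} e^{\ii\kappa\cos\varphi}\cos(n\varphi)\,\dd\varphi=\pi\,\ii^{\,n}J_n(\kappa)$ (you organize it as term-by-term integration of $g$'s series, the paper as a Chebyshev expansion of $e^{\ii\kappa x}$ paired with $g$'s expansion via orthogonality --- the same computation rearranged), and your justification of the interchange via the absolute/uniform convergence guaranteed by Lemma~\ref{prop:convergence} is at least as clean as the paper's, which instead invokes $\sum_n|a_n|^2<\infty$ together with the Bessel energy identity $J_0(\kappa)^2+2\sum_{n\ge1}J_n(\kappa)^2=1$. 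Where you genuinely diverge is the truncation bound: the paper stays on the coefficient side, applying Cauchy--Schwarz to the tail series $\pi\sum_{n>p}|a_n||J_n(\kappa)|$ and then using the Bessel energy identity to bound $\sum_{n>p}J_n(\kappa)^2\le 1/2$, before converting to $\|g-g_{\le p}\|_w$ via Parseval; you instead bound the integral directly by $\int_{-1}^{1}|g-g_{\le p}|\,w\,\dd x$ and apply Cauchy--Schwarz in $L^2_w$ against the constant function, with $\|1\|_w=\sqrt{\pi}$. Both yield exactly the constant $\sqrt{\pi}$; your version is more elementary (it needs no Bessel-function input beyond Part~(1) and makes the uniformity in $\kappa$ immediate), while the paper's version exposes how the bound degrades tail-by-tail in the Bessel/coefficient domain, which is the structure the subsequent regression model exploits. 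Your explicit remark that $p>1$ forces $n\ge1$ in the Parseval step, so that $\|T_n\|_w^2=\pi/2$ throughout the tail, is a detail the paper handles implicitly via Proposition~\ref{prop:parseval}.
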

\begin{proof}
	See Appendix~\ref{appendix-pf-theorem:CB-parseval}.
\end{proof}


Fig.~\ref{relation} shows the relationship among various functions/series.
Some \emph{comments} concerned with Theorem~\ref{theorem:CB-parseval} are in order.


First, Theorem \ref{theorem:CB-parseval}-(1) establishes an explicit series representation of the covariance directly in terms of the Chebyshev coefficients of the APS $g$. This derivation naturally inspires a truncation strategy to reformulate the covariance fitting problem as a finite linear system (cf. \cite{Interpretable-FLR-2009, morris2015functional}), enabling the subsequent linear regression formulation in Section~\ref{sec:APS_acquisition}.

Second, the covariance fitting error in the Chebyshev framework is fundamentally controlled by the Chebyshev tail coefficients $\{a_n\}_{p+1}^{\infty}$, which directly dictates how fast the error vanishes with the truncation order $p$. This relationship is formalized by the Chebyshev tail (cf. \eqref{eq:tail-function}). 
In other words, if the $p$-th order Chebyshev $g_{\le p}$ approximates $g$ well in the Chebyshev weighted norm, then the corresponding $r_{\le p}$ uniformly approximates $r$ with only a loss factor $\sqrt{\pi}$.

\begin{remark}
	\label{remark:smoothness}
Note that classical results on Chebyshev approximation show that the
truncation error in~\eqref{eq:tail-function} is entirely governed by the \emph{smoothness} of the APS $g$~\cite{Plonka2023Chebyshev,TrefethenATAP}. Specifically, if $g$ has $s$ bounded derivatives on $I$, then the covariance fitting error
$|r(\kappa) - r_{\le p}(\kappa)|$ decreases at least at a
polynomial rate as the truncation order $p$ increases; if $g$
is smooth enough, namely analytic (cf.~\cite[Ch. 8.1]{TrefethenATAP}), the error decays
exponentially fast in $p$ (see, e.g.,~\cite[Theorem 6.12]{Plonka2023Chebyshev} and~\cite[Theorem 8.1]{TrefethenATAP} for precise statements). Hence, a
smoother APS leads to much faster convergence and allows
accurate covariance modeling with a low truncation order $p$, which highlights that our fitting framework explicitly exploits the \emph{smoothness} of the APS. 
\end{remark}

The regularity of the APS also admits a physical interpretation:
A smoother APS typically corresponds to richer
scattering environments, which in turn lead to more stable and accurate
APS reconstruction. The smoothness assumptions on the APS have also
been discussed in~\cite{Agrawal-2021,Ninomiya-2022,Kaneko-2024}.

The above discussion motivates a parametric representation of the
channel covariance by a finite number of Chebyshev
coefficients of the APS. With a suitable truncation order $p$, the
relationship between these coefficients and the covariance
lags can be written in a linear form. We next exploit this structure
to cast covariance fitting as a linear regression problem.

\begin{figure}[t]
\centering
\includegraphics[width=0.9\linewidth]{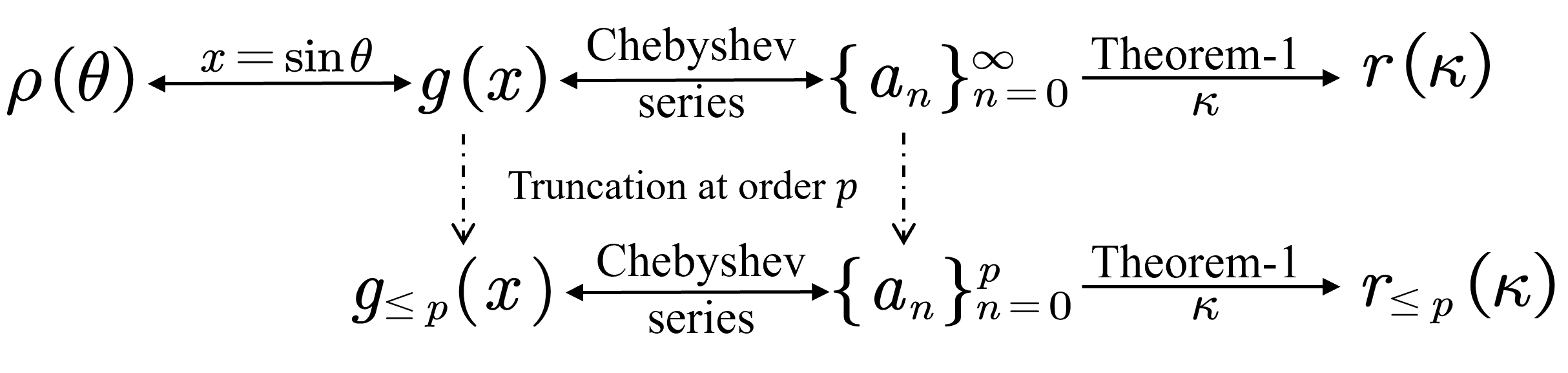}
\caption{Relations among channel covariance, (truncated) APS functions, and the corresponding Chebyshev series.}
\label{relation}
\end{figure}

\subsection{Linear Regression Formulation} 
\label{sec:APS_acquisition}

Recall the intermediate function $g(x) = \rho(\arcsin x)$ with
$x = \sin \theta$. The Chebyshev coefficients $\{a_n\}$ of $g$
provide an equivalent parametric representation of the APS
$\rho(\theta)$:
\begin{equation}
	\label{eq:rho-expansion}
	\rho(\theta) = g(\sin\theta) = \sum_{n\geq0} a_n T_n(\sin\theta).
\end{equation}
Hence, estimating the APS $\rho(\theta)$ is equivalent to
estimating the coefficient sequence $\{a_n\}$. Combining this
representation with the Chebyshev-series expression of the
covariance sequence in~\eqref{eq:cheby-Bessel} yields a linear relationship between
$\{a_n\}$ and the covariance lags $\{r_m\}$.

Expanding $g(x) = \sum_{n\ge 0} a_n T_n(x)$ in $L^2_w$ and
using Theorem~\ref{theorem:CB-parseval} give the exact series representation
\begin{subequations} 
\label{eq:r-Bessel-expansions}
\begin{align}
  \Re(r_m)
  &= \pi \sum_{n\ge 0} (-1)^n a_{2n} J_{2n}(\kappa_m),
  \label{eq:real-part}\\
  \Im(r_m)
  &= \pi \sum_{n\ge 0} (-1)^n a_{2n+1} J_{2n+1}(\kappa_m),
  \label{eq:imag-part}
\end{align}
\end{subequations}
for $m = 0,\ldots,M-1$. Motivated by
the smoothness properties of the APS discussed above, we approximate the series in~\eqref{eq:r-Bessel-expansions} by truncating
at an odd order $p>1$. The first $p+1$ coefficients
$\{a_n\}_{n=0}^{p}$ are retained as unknown parameters, whereas
the remaining tail is absorbed into a modeling error term.

Let $p_{\mathrm{half}} := \lfloor p/2 \rfloor$. We collect the
\textbf{e}ven- and \textbf{o}dd-indexed coefficients into
\begin{equation}
\label{eq:aeven-aodd-def}
\begin{aligned}
  \bs{a}_{\mathrm{e}}
  &:= \big[\sqrt{2}\,a_0,\, a_2,\,\ldots,\,a_{p-1}\big]^\top
      \in \mathbb{R}^{p_{\mathrm{half}}}, \\
  \bs{a}_{\mathrm{o}}
  &:= \big[a_1,\,a_3,\,\ldots,\,a_p\big]^\top
      \in \mathbb{R}^{p_{\mathrm{half}}}.
\end{aligned}
\end{equation}
Stacking the relations in~\eqref{eq:r-Bessel-expansions} for
$m = 0,\ldots,M-1$ then leads to two real-valued linear
regressions,
\begin{subequations}\label{eq:two-regressions}
\begin{align}
  \bs{y}_{\mathrm{real}}
  &= \bs{\Phi}_{\mathrm{e}}\, \bs{a}_{\mathrm{e}}
     + \bs{\epsilon}_{\mathrm{e}} \in \mathbb{R}^{M},
     \label{eq:reg-even}\\
  \bs{y}_{\mathrm{imag}}
  &= \bs{\Phi}_{\mathrm{o}}\, \bs{a}_{\mathrm{o}}
     + \bs{\epsilon}_{\mathrm{o}} \in \mathbb{R}^{M-1},
     \label{eq:reg-odd}
\end{align}
\end{subequations}
where the truncation residuals are given by
\[
\begin{split}
  [\bs{\epsilon}_{\mathrm{e}}]_j
  &= \sum_{n\ge p_{\mathrm{half}}}
       (-1)^n a_{2n} J_{2n}(\kappa_j),
     \quad \qquad  0 \le j \le M-1, \\
  [\bs{\epsilon}_{\mathrm{o}}]_j
  &= \sum_{n\ge p_{\mathrm{half}}}
       (-1)^n a_{2n+1} J_{2n+1}(\kappa_{j+1}),  0\leq  j \le M-2,
\end{split}
\]
and the ``effective observations" are defined as
\begin{subequations}\label{eq:y-def}
\begin{align}
  \bs{y}_{\mathrm{real}}
  &:= \frac{1}{\pi}
      \big[\Re(r_0),\Re(r_1),\ldots,\Re(r_{M-1})\big]^\top,
      \\
  \bs{y}_{\mathrm{imag}}
  &:= \frac{1}{\pi}
      \big[\Im(r_1),\Im(r_2),\ldots,\Im(r_{M-1})\big]^\top.
      \label{eq:y-imag}
\end{align}
\end{subequations}
The design matrices
$\mathbf{\Phi}_{\mathrm{e}}\in\mathbb{R}^{M\times p_{\mathrm{half}}}$,
$\mathbf{\Phi}_{\mathrm{o}}\in\mathbb{R}^{(M-1)\times p_{\mathrm{half}}}$
are defined entry-wise, for $n=0,1,\ldots,p_{\mathrm{half}}-1$, as
\begin{subequations}\label{eq:Phi-def}
\begin{align}
  [\bs{\Phi}_{\mathrm{e}}]_{j,n}
  &:= (-1)^n \frac{1}{\sqrt{1+\delta_{n0}}}J_{2n}(\kappa_j),
      0 \le j \le M-1, \\
  [\bs{\Phi}_{\mathrm{o}}]_{j,n}
  &:= (-1)^n J_{2n+1}(\kappa_{j+1}),
     \quad  0 \le j \le M-2. \label{eq:Phi-o}
\end{align}
\end{subequations}

Note that $\Im(r_0) = 0$, which is consistent with the fact that
$J_{2k+1}(\kappa_0) = 0$ for all $k\in\mathbb{N}$ when $\kappa_0 = 0$~\cite{WatsonBessel}. Hence $\bs{y}_{\mathrm{imag}}$ in~\eqref{eq:y-imag} has length $M-1$ and
$\bs{\Phi}_{\mathrm{o}}$ in~\eqref{eq:Phi-o} has $M-1$ rows. To better
expose the degrees of freedom and obtain a unified real-valued model,
we define:
\begin{align}
\bs{\Phi}&:=\left[
\begin{matrix}
	\bs{\Phi}_{\mathrm{e}} &\bm{0} \\
	\bm{0} & \bs{\Phi}_{\mathrm{o}}
\end{matrix}
\right]
\in \mathbb R^{(2M-1) \times (p+1)} ,
\\
\bs{a}
&:=
[\sqrt{2}a_0, a_1,\ldots,a_p]^\top
\in \mathbb{R}^{p+1},
\label{eq:a-def}
\end{align}
and a fixed permutation matrix $\bs{\Pi}$ such that
\begin{equation}
	\mathbf{\Pi}\,\mathbf{a}
	=
	\begin{bmatrix}
		\bs{a}_{\mathrm{e}} \\
		\bs{a}_{\mathrm{o}}
	\end{bmatrix},
	\label{eq:Pi-def}
\end{equation}
with $\bs{a}_{\mathrm{e}}$ and $\bs{a}_{\mathrm{o}}$ defined in~\eqref{eq:aeven-aodd-def}.

Then, we can stack the two real systems in~\eqref{eq:two-regressions} as 
\begin{align}
	\label{eq:linear-reg}
	\bs y :=
	\begin{bmatrix} \bs{y}_{\mathrm{real}} \\ \bs{y}_{\mathrm{imag}} \end{bmatrix}
	= \bm{\Phi}\, \bs{\Pi} \bs a_{} + \bs \epsilon,
\end{align}
where $\bs{\epsilon}=[\bs{\epsilon}_{\mathrm{e}}; \bs{\epsilon}_{\mathrm{o}}] \in \RR^{2M-1}$ collects the truncation residuals.
Equation~\eqref{eq:linear-reg} is a compact real-valued linear
regression model that links the truncated Chebyshev coefficients of
the APS to the covariance lags.


\begin{corollary}
\label{corollary:Chebyshev}
The APS is perfectly recoverable via the Chebyshev
framework if the following holds:
(1) the APS can be represented by a degree-$p$
Chebyshev polynomial expansion, and (2) the matrix
$\mathbf{\Phi}\in\mathbb{R}^{(2M-1)\times(p+1)}$ has full column
rank. 
\end{corollary}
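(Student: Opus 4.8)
The plan is to note that the two hypotheses neutralize the two distinct error sources in the truncated model~\eqref{eq:linear-reg} separately: condition (1) annihilates the truncation residual $\bs{\epsilon}$, turning \eqref{eq:linear-reg} into a noiseless linear identity, while condition (2) makes that identity uniquely solvable for the Chebyshev coefficient vector $\bs{a}$. Recovering $\bs{a}$ then recovers $g$, and hence $\rho$, through the bijections already established in Section~\ref{sec:APS_acquisition}.

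\emph{Step 1 (the residual vanishes).} Under (1) the APS obeys $g=g_{\le p}$, i.e.\ its Chebyshev coefficients satisfy $a_n=0$ for every $n>p$. Inspecting the explicit residual formulas preceding~\eqref{eq:y-def}, each entry $[\bs{\epsilon}_{\mathrm{e}}]_j$ (resp.\ $[\bs{\epsilon}_{\mathrm{o}}]_j$) is a series over the tail coefficients $a_{2n}$ (resp.\ $a_{2n+1}$) with index $>p$, hence identically zero; equivalently, Theorem~\ref{theorem:CB-parseval}-(2) gives $\|g-g_{\le p}\|_w=0$, so $r(\kappa)=r_{\le p}(\kappa)$ for all $\kappa$, in particular at the sampling points $\{\kappa_m\}$. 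Therefore $\bs{\epsilon}=\bm{0}$ and \eqref{eq:linear-reg} collapses to the noiseless identity $\bs{y}=\bm{\Phi}\,\bs{\Pi}\,\bs{a}$, where $\bs{y}$ is obtained from $\bm{r}$ through the exactly invertible map~\eqref{eq:y-def}.

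\emph{Step 2 (the system is left-invertible).} The matrix $\bs{\Pi}$ is a permutation matrix, hence a square invertible matrix of size $p+1$; by (2) the matrix $\bm{\Phi}\in\mathbb{R}^{(2M-1)\times(p+1)}$ has full column rank $p+1$ (which also forces $2M-1\ge p+1$, making explicit the required relation between the number of antennas $M$ and the admissible truncation order $p$). Consequently $\bm{\Phi}\,\bs{\Pi}$ has full column rank $p+1$ and admits a left inverse, e.g.\ the Moore--Penrose pseudoinverse $(\bm{\Phi}\,\bs{\Pi})^{\dagger}=\bs{\Pi}^{\top}(\bm{\Phi}^{\top}\bm{\Phi})^{-1}\bm{\Phi}^{\top}$, with $\bm{\Phi}^{\top}\bm{\Phi}$ invertible by the rank hypothesis. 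Applying it to the identity from Step~1 yields $\bs{a}=(\bm{\Phi}\,\bs{\Pi})^{\dagger}\bs{y}$ uniquely; undoing the $\sqrt{2}$ scaling of the leading entry (cf.~\eqref{eq:a-def}) recovers $(a_0,\dots,a_p)$, and substituting into~\eqref{eq:rho-expansion} reconstructs $g$, hence $\rho(\theta)=g(\sin\theta)$ on all of $\Theta$ via $g(x)=\rho(\arcsin x)$.

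I do not anticipate a genuine obstacle: the result follows from Theorem~\ref{theorem:CB-parseval}-(2) (vanishing Chebyshev tail $\Rightarrow$ vanishing residual) together with elementary linear algebra (full column rank $\Rightarrow$ left-invertible). The only points deserving a sentence of care are (i) that hypothesis (1) terminates \emph{both} the even- and odd-indexed residual series, so the entire stacked residual $\bs{\epsilon}$ — not merely one block — vanishes, and (ii) recording the implicit feasibility condition $p+1\le 2M-1$ inherent in (2).
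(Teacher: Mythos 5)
Your proof is correct and follows exactly the reasoning the paper intends for this corollary (which it states without an explicit proof as an immediate consequence of the model~\eqref{eq:linear-reg}): hypothesis (1) kills both blocks of the truncation residual $\bs{\epsilon}$ via Theorem~\ref{theorem:CB-parseval}-(2), and hypothesis (2) plus the invertibility of the permutation $\bs{\Pi}$ gives unique left-inversion for $\bs{a}$, hence for $g$ and $\rho$. Your added remark that full column rank forces $p+1\le 2M-1$ is a correct and worthwhile observation consistent with the paper's degrees-of-freedom discussion around Corollary~\ref{corollayr:PLV}.
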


Based on~\eqref{eq:linear-reg}, we will further explore the implicit structural information of $\bs{a}$ in Section~\ref{sec:Interpretable} to obtain a robust estimation.


\subsection{Connection to PLV Approach}

We revisit the PLV
scheme of~\cite{Miretti-2021,Miretti-2018} and place it in the same transformed-domain
representation $g(x) = \rho(\arcsin x) \in L^2_w(I)$ as in our Chebyshev
framework. This leads to a geometric interpretation of PLV 
and shows that PLV reconstructs the APS within a fixed-order
trigonometric polynomial model. 


As its name implies, the PLV approach projects the origin onto a specific linear variety (i.e., affine set), as formally defined in Lemma~\ref{lem:Affine-sub-space}.
To facilitate analysis and build the connection, we state the following Lemma.

\begin{lemma}[Affine structure and orthogonal decomposition]
\label{lem:Affine-sub-space}
Let $\bs{r} = [r_0,\ldots,r_{M-1}]^{\top} \in \mathbb{C}^{M}$ be a
given covariance vector, and define $r_{-m} := r_m^{*}$ for
$m=1,\ldots,M-1$. Consider the $L^2_w$ space in Definition~\ref{def:weighted-L2}
and define the affine set
\begin{align}
  \mathcal{V}_{\bs{r}}
  &:= \big\{ f  \;|\;
        \langle f, e^{\ii \kappa_m (\cdot)} \rangle_w = r_m,
        \; -M < m < M \big\}, \label{eq:Vr-def}
\end{align}
together with the closed subspace (i.e., null space)
\begin{align}
  \mathcal{N}
  &:= \big\{ f  \;|\;
        \langle f, e^{\ii \kappa_m (\cdot)} \rangle_w = 0,
        \; -M < m < M \big\}, \label{eq:N-def}
\end{align}
and its orthogonal complement
\begin{align}
  \mathcal{N}_{\bot}
  &:= \big\{ f  \;|\;
        \langle f, h \rangle_w = 0,\;\forall\, h \in \mathcal{N} \big\}.
        \label{eq:N-perp-def}
\end{align}
Then the following two statements hold true:
\begin{enumerate}
  \item $\mathcal{V}_{\bs{r}}$ is an affine subset of $L^2_w(I)$
        with direction $\mathcal{N}$, i.e., for any
        $f_1 \in \mathcal{V}_{\bs{r}}$ and
        $f_2 \in \mathcal{N}$ we have $f_1 + f_2 \in \mathcal{V}_{\bs{r}}$,
        and hence
        \(
          \mathcal{V}_{\bs{r}} + \mathcal{N} = \mathcal{V}_{\bs{r}}.
        \)
  \item The orthogonal complement $\mathcal{N}_{\bot}$ is a
        finite-dimensional subspace spanned by trigonometric
        polynomials, i.e.,
        \begin{equation}
          \label{eq:complement-space}
          \mathcal{N}_{\bot}
          = \big\{ f(\cdot;\bs{b}) \;|\;
                     \bs{b} \in \mathbb{R}^{2M-1} \big\},
        \end{equation}
        where
        \[
          f(x;\bs{b})
          = b_0
            + \sum_{m=1}^{M-1}
               b_m \cos(\kappa_m x)
                  + b_{M-1+m} \sin(\kappa_m x) ,
        \]
        with $\kappa_m$ denoting the spatial-frequency samples
        associated with the covariance lags (cf. Remark~\ref{remark:spatial-frequency-samples}).
\end{enumerate}
\end{lemma}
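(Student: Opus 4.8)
The plan is to prove the two claims of Lemma~\ref{lem:Affine-sub-space} separately, both following directly from linearity of the inner product and the finite number of constraints. For part (1), I would take $f_1 \in \mathcal{V}_{\bs r}$ and $f_2 \in \mathcal{N}$ and compute, for each index $m$ with $-M < m < M$,
\[
  \langle f_1 + f_2, e^{\ii \kappa_m(\cdot)}\rangle_w
  = \langle f_1, e^{\ii \kappa_m(\cdot)}\rangle_w + \langle f_2, e^{\ii \kappa_m(\cdot)}\rangle_w
  = r_m + 0 = r_m,
\]
so $f_1 + f_2 \in \mathcal{V}_{\bs r}$; this gives $\mathcal{V}_{\bs r} + \mathcal{N} \subseteq \mathcal{V}_{\bs r}$, and the reverse inclusion is immediate since $0 \in \mathcal{N}$. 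I should also note that $\mathcal{V}_{\bs r}$ is nonempty (it contains, e.g., the element of $\mathcal{N}_\bot$ constructed in part (2), or one may simply assume $\bs r$ arises from a genuine APS as in~\eqref{eq:r-vec-exp}), which is what makes ``affine subset with direction $\mathcal{N}$'' meaningful rather than vacuous. The conjugate-symmetry convention $r_{-m} := r_m^*$ is consistent here because $\langle f, e^{\ii\kappa_{-m}(\cdot)}\rangle_w = \langle f, e^{-\ii\kappa_m(\cdot)}\rangle_w = \overline{\langle f, e^{\ii\kappa_m(\cdot)}\rangle_w}$ for real-valued $f$, so imposing the constraint for positive $m$ already forces it for negative $m$; I would remark on this to keep the indexing honest.

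For part (2), the key observation is that $\mathcal{N}$ is the common kernel of the $2M-1$ bounded linear functionals $\ell_m(f) := \langle f, e^{\ii\kappa_m(\cdot)}\rangle_w$, $-M < m < M$ — or, working over the reals, the kernel of the $2M-1$ real functionals $f \mapsto \langle f, \cos(\kappa_m \cdot)\rangle_w$ for $0 \le m \le M-1$ and $f \mapsto \langle f, \sin(\kappa_m \cdot)\rangle_w$ for $1 \le m \le M-1$ (the $\sin(\kappa_0\,\cdot) = 0$ case is dropped, explaining the count $2M-1$). Let $\mathcal{T}$ be the span of these $2M-1$ real trigonometric functions. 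A standard Hilbert-space fact is that for finitely many elements $h_1,\dots,h_K \in L^2_w$, the orthogonal complement of $\bigcap_k \{h_k\}^\perp$ equals $\mathrm{span}\{h_1,\dots,h_K\}$; applying this with the $h_k$ being the trigonometric functions above gives $\mathcal{N}_\bot = \mathcal{T}$, which is exactly~\eqref{eq:complement-space} with the stated parametrization $f(x;\bs b)$. I would either cite this fact or prove it inline: $\mathcal{T} \subseteq \mathcal{N}_\bot$ is clear from the definition of $\mathcal{N}$, and $\mathcal{N}_\bot \subseteq \mathcal{T}$ follows by writing any $f$ as its $\mathcal{T}$-projection plus a remainder, checking the remainder lies in $\mathcal{N}$, and noting that if $f \perp \mathcal{N}$ then $f$ must coincide with its projection onto $\mathcal{T}$.

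The main obstacle — really the only point requiring care — is the passage between the complex formulation in~\eqref{eq:Vr-def}--\eqref{eq:N-perp-def}, which uses $e^{\ii\kappa_m(\cdot)}$ for $-M < m < M$, and the real trigonometric description in~\eqref{eq:complement-space}, which uses $2M-1$ real functions. One must verify that the complex span of $\{e^{\ii\kappa_m(\cdot)} : -M < m < M\}$, intersected with the real-valued functions, is precisely the real span of $\{1\} \cup \{\cos(\kappa_m\cdot), \sin(\kappa_m\cdot) : 1 \le m \le M-1\}$ — this is the usual Euler-formula change of basis, and the dimension count $2M - 1$ checks out (the constant from $m=0$, plus a cosine and a sine for each $m = 1,\dots,M-1$). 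I would also want to confirm that these $2M-1$ trigonometric functions are linearly independent in $L^2_w(I)$ — which holds since distinct frequencies $\kappa_m = \gamma\pi m$ are distinct and the weight $w$ is strictly positive a.e., so no nontrivial trigonometric polynomial vanishes $w$-a.e. on $I = [-1,1]$ — so that $\mathcal{N}_\bot$ genuinely has dimension $2M-1$ and the parametrization by $\bs b \in \mathbb{R}^{2M-1}$ is a bijection. Once these linear-algebra bookkeeping items are settled, both statements follow with no analysis beyond bounded linearity of the inner-product functionals.
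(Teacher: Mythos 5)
Your proof is correct and follows essentially the same route as the paper's: linearity of the inner-product functionals for the affine structure in part (1), and for part (2) the identification of $\mathcal{N}$ as the orthogonal complement of the finite-dimensional real trigonometric span (via the Euler-formula change of basis between $\{e^{\ii\kappa_m(\cdot)}\}_{-M<m<M}$ and $\{1,\cos(\kappa_m\cdot),\sin(\kappa_m\cdot)\}$), followed by double orthogonal complementation using the fact that finite-dimensional subspaces are closed. Your added remarks on nonemptiness of $\mathcal{V}_{\bs r}$ and on linear independence of the $2M-1$ trigonometric functions in $L^2_w(I)$ are sound refinements that the paper leaves implicit.
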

\begin{proof}
    See Appendix~\ref{proof:affine}.
\end{proof}

Lemma~\ref{lem:Affine-sub-space} shows that
$g \in L^2_w(I)$ consistent with the covariance constraints form an
affine set $\mathcal{V}_{\bs r}$ with direction $\mathcal{N}$, and
that $\mathcal{N}_{\bot}$ is a finite-dimensional subspace spanned by
trigonometric polynomials. The PLV~\cite{Miretti-2021} can
thus be viewed as selecting the unique element in
$\mathcal{V}_{\bs r} \cap \mathcal{N}_{\bot}$ (see Fig.~\ref{fig:PLV}). Building
on this geometric picture, the following proposition summarizes
several equivalent characterizations of the PLV solution.

\begin{figure}
	\centering
	\includegraphics[width=0.4\linewidth]{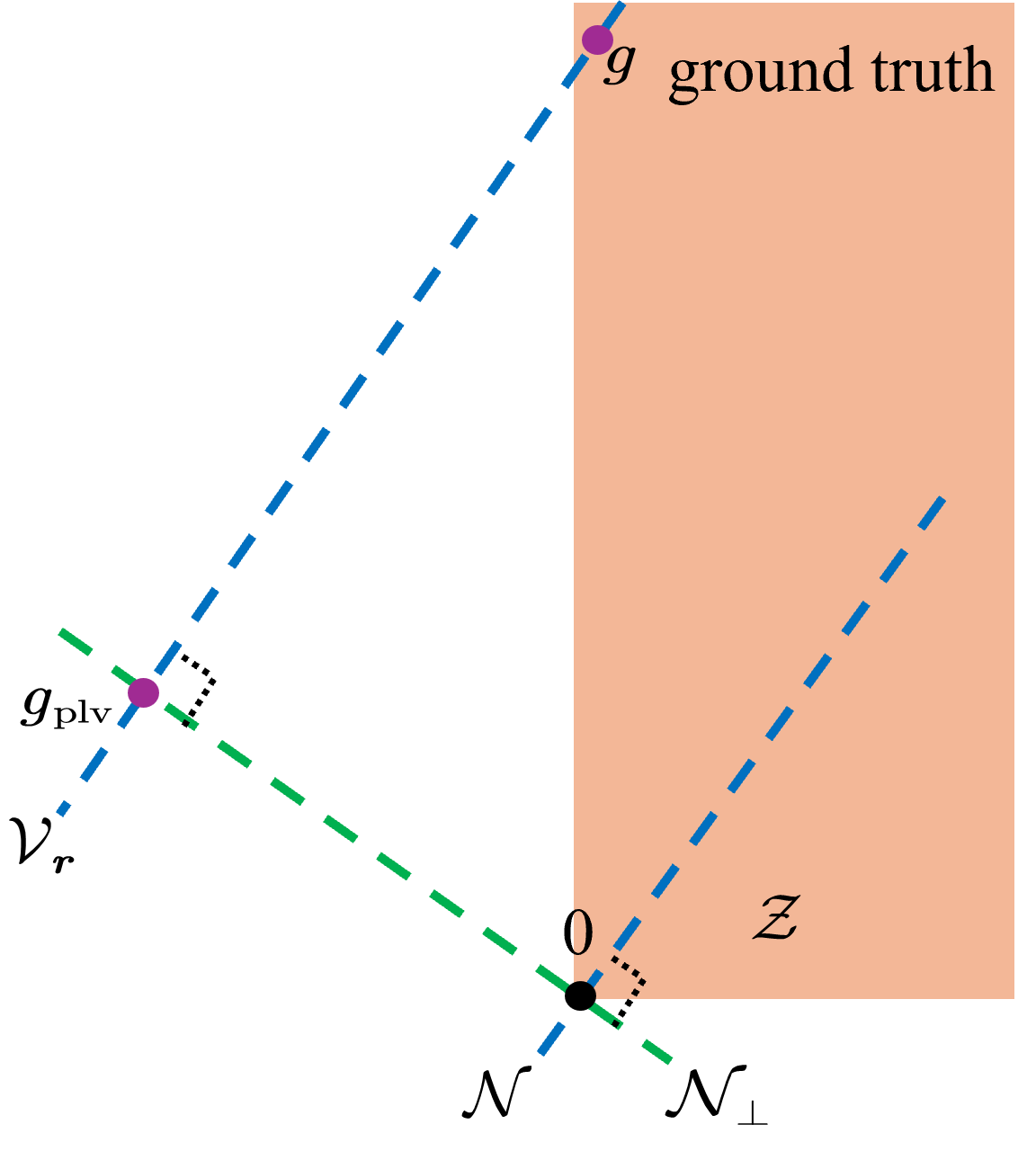}
	\caption{Geometric interpretation of the PLV. The affine set $\mathcal{V}_{\bs r}$ collects all functions consistent with the covariance constraints, $\mathcal{N}$ denotes the associated null space, and $\mathcal{N}_{\bot}$ is its orthogonal
		complement. $\mathcal{Z}$ is defined in Section~\ref{subsec:eapm}.}
	\label{fig:PLV}
\end{figure}

\begin{proposition}[Equivalent characterizations of the PLV]
\label{prop:PLV-equivalent}
Recall the sets $\mathcal{V}_{\bs{r}}$, $\mathcal{N}_{\bot}$ defined in
Lemma~\ref{lem:Affine-sub-space}. Let $g_{\mathrm{plv}} \in L^2_w(I)$ denote the solution of the PLV~\cite{Miretti-2021}. Then the following four
statements are equivalent and uniquely characterize $g_{\mathrm{plv}}$.
\begin{enumerate}
  \item \textbf{Minimum-norm formulation:}
        \begin{equation}
          g_{\mathrm{plv}}
          \;=\;
          \arg\min_{\mathring{g} \in \mathcal{V}_{\bs{r}}}
                 \|\mathring{g}\|_{w}
          \;=\;
          \mathbb{P}_{\mathcal{V}_{\bs{r}}}(0).
          \label{eq:plv-min-norm}
        \end{equation}

  \item \textbf{Geometric feasibility formulation:}
        \begin{equation}
          g_{\mathrm{plv}}=\mathcal{V}_{\bs{r}} \cap \mathcal{N}_{\bot}
          \label{eq:plv-intersection}
        \end{equation}
        i.e., $g_{\mathrm{plv}}$ is the unique element in the
        intersection set.

  \item \textbf{Trigonometric polynomial parameterization:}  
        There is a unique vector
        $\bs{b} = [b_0,\ldots,b_{2M-2}]^{\top} \in \mathbb{R}^{2M-1}$
        such that
        \begin{equation}
          g_{\mathrm{plv}}(x;\bs{b})
          := 
          b_0
          + \sum_{m=1}^{M-1}
             b_m \cos(\kappa_m x)
                + b_{M-1+m} \sin(\kappa_m x) ,
          \label{eq:plv-trig-form}
        \end{equation}
        and $\bs{b}$ solves the following feasibility problem:
        Letting $r_{-m} := r_m^{*}$ for $m=-(M-1),\ldots,M-1$,
        \begin{equation}
          \begin{aligned}
          \text{find } \bs{b} \in \mathbb{R}^{2M-1}
           \text{s.t.}\quad
          \langle g_{\mathrm{plv}}(\cdot;\bs{b}),
                  e^{\ii \kappa_m (\cdot)} \rangle_w
          = r_m. 
          \end{aligned}
          \label{eq:plv-trig-feas}
        \end{equation}

  \item \textbf{Closed-form coefficient representation:}  
        The vector $\bs{b}$ in~\eqref{eq:plv-trig-form} admits the
        closed form
        \begin{equation}
          \bs{b}
          = \pi \mathbf{G}^{-1} \bs{y},
          \label{eq:plv-closed-form}
        \end{equation}
        where $\bs{y}$ is defined
        in~\eqref{eq:linear-reg}, and
        $\mathbf{G}$ satisfies\footnote{Note that matrix $\mathbf G$ slightly differs from the one in~\cite[Proposition~1]{Miretti-2021}; in particular, the latter is rank-deficient, whereas $\mathbf G$ is full-rank in our setting.
        }
        \begin{equation}
          \mathbf{G}
          :=
          \begin{bmatrix}
            \mathbf{G}_{\Re} & \mathbf{0} \\
            \mathbf{0}       & \mathbf{G}_{\Im}
          \end{bmatrix} \in \mathbb{R}^{(2M-1)\times(2M-1)},
          \label{eq:plv-G-def}
        \end{equation}
        with $\mathbf{G}_{\Re} \in \mathbb{R}^{M\times M}$ and
        $\mathbf{G}_{\Im} \in \mathbb{R}^{(M-1)\times(M-1)}$
        given by
       \[
    \begin{split}
    [\mathbf{G}_{\Re}]_{m,n}
      &= \tfrac{\pi}{2}\big(J_0(\kappa_{m-n}) + J_0(\kappa_{m+n})\big), \\
    [\mathbf{G}_{\Im}]_{m^{\prime}-1,n^{\prime}-1}
      &= \tfrac{\pi}{2}\big(J_0(\kappa_{m^{\prime}-n^{\prime}}) - J_0(\kappa_{m^{\prime}+n^{\prime}})\big),
    \end{split}
    \]
    for $0 \le m,n \le M-1$ and $1 \le m^{\prime},n^{\prime} \le M-1$, respectively.
    In particular, $\mathbf{G} \succ \bs{0}$ and hence invertible.
\end{enumerate}
\end{proposition}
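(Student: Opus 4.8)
The plan is to prove the four statements equivalent by establishing the chain (1)\,$\Leftrightarrow$\,(2)\,$\Leftrightarrow$\,(3)\,$\Leftrightarrow$\,(4), from which the uniqueness claims and $\mathbf{G}\succ\bs 0$ will follow. Throughout I would regard $L^2_w(I)$ as a \emph{real} Hilbert space, so that each complex constraint $\langle f, e^{\ii\kappa_m(\cdot)}\rangle_w = r_m$ is equivalent to the pair of real constraints $\langle f,\cos(\kappa_m\cdot)\rangle_w=\Re r_m$ and $\langle f,\sin(\kappa_m\cdot)\rangle_w=\Im r_m$. Since $I$ is bounded, $\cos(\kappa_m\cdot),\sin(\kappa_m\cdot)\in L^2_w(I)$, so these are bounded linear functionals and $\mathcal{V}_{\bs r}$ is a closed affine set; by Lemma~\ref{lem:Affine-sub-space} it is nonempty (it contains the transformed APS $g$ of the underlying model, by~\eqref{eq:integral_g}), its direction is the closed subspace $\mathcal{N}$, and $\mathcal{N}_\bot$ is the finite-dimensional span of the trigonometric polynomials in~\eqref{eq:plv-trig-form}.

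First I would treat the Hilbert-space equivalences (1)\,$\Leftrightarrow$\,(2)\,$\Leftrightarrow$\,(3). For (1)\,$\Leftrightarrow$\,(2), fix any $f_0\in\mathcal{V}_{\bs r}$ and use the orthogonal decomposition $f_0 = \mathbb{P}_{\mathcal{N}}(f_0) + \mathbb{P}_{\mathcal{N}_\bot}(f_0)$; since $\mathcal{V}_{\bs r} + \mathcal{N} = \mathcal{V}_{\bs r}$ by Lemma~\ref{lem:Affine-sub-space}-(1), the element $\mathbb{P}_{\mathcal{N}_\bot}(f_0) = f_0 - \mathbb{P}_{\mathcal{N}}(f_0)$ still lies in $\mathcal{V}_{\bs r}$, so it belongs to $\mathcal{V}_{\bs r}\cap\mathcal{N}_\bot$, and a direct computation of $\arg\min_{n\in\mathcal{N}}\|f_0+n\|_w$ identifies it with $\mathbb{P}_{\mathcal{V}_{\bs r}}(0)$. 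Conversely, if $g\in\mathcal{V}_{\bs r}\cap\mathcal{N}_\bot$ then for every $v\in\mathcal{V}_{\bs r}$ we have $v-g\in\mathcal{N}$ and $g\perp(v-g)$, whence $\|v\|_w^2 = \|g\|_w^2 + \|v-g\|_w^2$; this shows $g$ is the unique minimum-norm element and that the intersection is a singleton. For (2)\,$\Leftrightarrow$\,(3), Lemma~\ref{lem:Affine-sub-space}-(2) gives $\mathcal{N}_\bot = \{f(\cdot;\bs b):\bs b\in\mathbb{R}^{2M-1}\}$, so membership in $\mathcal{V}_{\bs r}\cap\mathcal{N}_\bot$ is the same as solving the feasibility problem~\eqref{eq:plv-trig-feas}; the map $\bs b\mapsto f(\cdot;\bs b)$ is a linear bijection because $1,\cos(\kappa_1 x),\dots,\cos(\kappa_{M-1}x),\sin(\kappa_1 x),\dots,\sin(\kappa_{M-1}x)$ are linearly independent on $I$ (they are real-analytic with distinct, positive frequencies $\kappa_m=\gamma\pi m$ for $m\ge1$), so combined with the singleton property the solution $\bs b$ of~\eqref{eq:plv-trig-feas} is unique.

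Next I would carry out (3)\,$\Leftrightarrow$\,(4) together with $\mathbf{G}\succ\bs 0$. Substituting $f(\cdot;\bs b)$ from~\eqref{eq:plv-trig-form} into the constraints and evaluating the inner products with the Bessel integral $\int_{-1}^{1}e^{\ii t x}w(x)\,\dd x = \pi J_0(t)$ (the $n=0$ case of~\eqref{eq:cheby-Bessel}) and the product-to-sum identities, the evenness of $w$ forces all pairings of an even function against $\sin(\kappa_m\cdot)$ and of an odd function against $\cos(\kappa_m\cdot)$ to vanish; this decouples the $2M-1$ real constraints into a block involving only $b_0,\dots,b_{M-1}$ (matched to $\{\Re r_m\}_{m=0}^{M-1}$) and a block involving only $b_M,\dots,b_{2M-2}$ (matched to $\{\Im r_m\}_{m=1}^{M-1}$, the equation for $m=0$ being the trivial identity $0=\Im r_0$). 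One then checks entry-by-entry that the two coefficient matrices are exactly $\mathbf{G}_{\Re}$ and $\mathbf{G}_{\Im}$ as stated, and --- the key observation --- that $[\mathbf{G}_{\Re}]_{m,n}=\langle\cos(\kappa_m\cdot),\cos(\kappa_n\cdot)\rangle_w$ and $[\mathbf{G}_{\Im}]_{m-1,n-1}=\langle\sin(\kappa_m\cdot),\sin(\kappa_n\cdot)\rangle_w$ (with the convention $\cos(\kappa_0\cdot)\equiv1$), so that $\mathbf{G}$ is precisely the Gram matrix of the linearly independent family above. Hence $\mathbf{G}\succ\bs 0$, it is invertible, and the decoupled system --- which after dividing by $\pi$ reads $\mathbf{G}\bs b/\pi=\bs y$ with $\bs y$ as in~\eqref{eq:linear-reg} --- has the unique solution $\bs b=\pi\mathbf{G}^{-1}\bs y$; this Gram-matrix identity is also what makes the present $\mathbf{G}$ full-rank, in contrast to the rank-deficient matrix in~\cite[Proposition~1]{Miretti-2021}.

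I expect the main obstacle to be the computational bookkeeping in this last step: verifying the even/odd decoupling, matching the product-to-sum outputs to the claimed closed forms of $\mathbf{G}_{\Re},\mathbf{G}_{\Im}$ including the special $m=0$ row and column, and tracking the $1/\pi$ normalization that links the right-hand side to $\bs y$. Recognizing $\mathbf{G}$ as a Gram matrix is the clean route to $\mathbf{G}\succ\bs 0$, avoiding any direct analysis of Bessel-function determinants. By contrast, the Hilbert-space steps (1)\,$\Leftrightarrow$\,(2)\,$\Leftrightarrow$\,(3) are routine once Lemma~\ref{lem:Affine-sub-space} is available.
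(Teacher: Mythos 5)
Your proposal is correct and follows essentially the same route as the paper's proof: the chain $1)\Leftrightarrow 2)\Leftrightarrow 3)\Leftrightarrow 4)$, the orthogonal (Pythagorean) decomposition along $\mathcal{N}$ for the first equivalence, the trigonometric parameterization of $\mathcal{N}_{\bot}$ from Lemma~\ref{lem:Affine-sub-space} for the second, and the identification of $\mathbf{G}_{\Re}$, $\mathbf{G}_{\Im}$ as Gram matrices of the linearly independent cosine/sine family (via $\int_{-1}^{1}\cos(tx)\,w(x)\,\dd x=\pi J_0(t)$ and product-to-sum identities) to obtain $\mathbf{G}\succ\bs{0}$ and the closed form $\bs{b}=\pi\mathbf{G}^{-1}\bs{y}$. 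No substantive differences from the paper's argument.
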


\begin{proof}
    See Appendix~\ref{proof:proof-PLV}.
\end{proof}


\begin{corollary}
\label{corollayr:PLV}
	The APS becomes {perfectly recoverable} via the PLV approach if and only if the ground-truth APS $g \in \mathcal{N}_{\bot}$, i.e., is represented by the trigonometric polynomials in~\eqref{eq:plv-trig-form} with order up to $2M-2$ (a total of $2M-1$ degrees).
\end{corollary}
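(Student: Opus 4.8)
The plan is to obtain this corollary directly from the equivalent characterizations of the PLV solution in Proposition~\ref{prop:PLV-equivalent} together with the structural description of $\mathcal{N}_{\bot}$ in Lemma~\ref{lem:Affine-sub-space}; no new computation is needed. The first step is to observe that the ground-truth APS $g$ always lies in the affine set $\mathcal{V}_{\bs r}$. Indeed, by the covariance model~\eqref{eq:integral_g} we have $\langle g, e^{\ii\kappa_m(\cdot)}\rangle_w = r_m$ for $m=0,\ldots,M-1$, and since $g$ (equivalently $\rho$) is real-valued and $w$ is real, taking conjugates gives $\langle g, e^{\ii\kappa_{-m}(\cdot)}\rangle_w = \overline{\langle g, e^{\ii\kappa_m(\cdot)}\rangle_w} = r_m^{*} =: r_{-m}$; hence the defining identities of $\mathcal{V}_{\bs r}$ hold for all $-M<m<M$, so $g\in\mathcal{V}_{\bs r}$. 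Moreover $g$ is Lipschitz continuous by Assumption~\ref{assump:APS-continuity}, hence a legitimate element of $L^2_w(I)$ on which PLV operates.

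Next I would unwind ``perfectly recoverable'' as the statement $g_{\mathrm{plv}} = g$, and prove both implications. For the ``if'' direction, assume $g\in\mathcal{N}_{\bot}$; combined with $g\in\mathcal{V}_{\bs r}$ this yields $g\in\mathcal{V}_{\bs r}\cap\mathcal{N}_{\bot}$, which by Proposition~\ref{prop:PLV-equivalent}, statement~(2), is a singleton whose unique element is $g_{\mathrm{plv}}$; therefore $g=g_{\mathrm{plv}}$. For the ``only if'' direction, assume $g=g_{\mathrm{plv}}$; since $g_{\mathrm{plv}}\in\mathcal{N}_{\bot}$ (Proposition~\ref{prop:PLV-equivalent}, statement~(2), and explicitly statement~(3) exhibits $g_{\mathrm{plv}}$ as a trigonometric polynomial lying in $\mathcal{N}_{\bot}$), it follows that $g\in\mathcal{N}_{\bot}$. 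This closes the equivalence.

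Finally I would convert the abstract membership $g\in\mathcal{N}_{\bot}$ into the concrete parametric claim. By Lemma~\ref{lem:Affine-sub-space}, statement~(2), $\mathcal{N}_{\bot}=\{f(\cdot;\bs b):\bs b\in\mathbb{R}^{2M-1}\}$, where $f(x;\bs b)=b_0+\sum_{m=1}^{M-1}\bigl(b_m\cos(\kappa_m x)+b_{M-1+m}\sin(\kappa_m x)\bigr)$; this is precisely the span of the $2M-1$ functions $\{1,\cos(\kappa_m x),\sin(\kappa_m x)\}_{m=1}^{M-1}$, which are linearly independent in $L^2_w(I)$ (as established in the proof of Lemma~\ref{lem:Affine-sub-space}). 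Hence $g\in\mathcal{N}_{\bot}$ is equivalent to $g$ being exactly representable by such a trigonometric polynomial of order up to $2M-2$, i.e., with $2M-1$ degrees of freedom, as stated.

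The main point requiring care is the uniqueness of the element of $\mathcal{V}_{\bs r}\cap\mathcal{N}_{\bot}$, which underpins the ``if'' direction: this rests on the orthogonal decomposition $L^2_w(I)=\mathcal{N}\oplus\mathcal{N}_{\bot}$ together with $\mathcal{V}_{\bs r}$ being a nonempty affine set with direction $\mathcal{N}$ (both supplied by Lemma~\ref{lem:Affine-sub-space} and used in Proposition~\ref{prop:PLV-equivalent}), so that $\mathcal{V}_{\bs r}\cap\mathcal{N}_{\bot}=\{\mathbb{P}_{\mathcal{N}_{\bot}}(g)\}$. One should also be careful not to weaken ``perfectly recoverable'' to mean merely that $g_{\mathrm{plv}}$ reproduces the same covariance lags as $g$ — the equality $g_{\mathrm{plv}}=g$ in $L^2_w(I)$ is what is being characterized, and it is exactly equivalent to $g\in\mathcal{N}_{\bot}$.
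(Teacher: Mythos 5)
Your proof is correct and follows exactly the route the paper intends: the corollary is stated without an explicit proof because it is an immediate consequence of Proposition~\ref{prop:PLV-equivalent}(2) (the singleton $\mathcal{V}_{\bs r}\cap\mathcal{N}_{\bot}=\{g_{\mathrm{plv}}\}$) together with the description of $\mathcal{N}_{\bot}$ in Lemma~\ref{lem:Affine-sub-space}(2), and your write-up supplies precisely these steps, including the worthwhile observation that the ground truth $g$ lies in $\mathcal{V}_{\bs r}$.
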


Note that the result of Corollary~\ref{corollayr:PLV} is consistent with Corollary~\ref{corollary:Chebyshev}, in terms of degrees of freedom (model order).

\begin{remark}[Additional insight into the PLV formulation]
Building upon the original PLV method in~\cite{Miretti-2021}, Proposition~\ref{prop:PLV-equivalent} makes
two aspects of the method more explicit.
\begin{itemize}
    \item 
\emph{Appearance of trigonometric polynomials}.
The explicit description of $\mathcal{N}_{\bot}$ shows that
PLV reconstructs the APS within a fixed-order trigonometric polynomial
model in the transformed domain $g \in L^2_w(I)$. In particular, PLV selects the basis
$\{1,\cos(\kappa_m x),\sin(\kappa_m x)\}$ for representing $g$.
\item
\emph{Uniqueness}.
The closed-form relation~\eqref{eq:plv-closed-form} reveals that the PLV
estimate is linear to the covariance, with a
matrix $\mathbf{G}$ that depends only on the array geometry and frequencies. Since $\mathbf{G} \succ\bs{0}$, the coefficients $\bs{b}$, and hence
$g_{\mathrm{plv}}$, are uniquely determined by the covariance
constraints. This makes the uniqueness of the PLV solution explicit.
\end{itemize}
\end{remark}

\textit{Discussion}.
Proposition~\ref{prop:PLV-equivalent} shows that PLV recovers the APS by projecting
onto a fixed dimensional trigonometric polynomial subspace
in the transformed domain $g \in L^2_w(I)$, yielding a unique solution
that depends linearly on the covariance vector. Our Chebyshev-based
framework enforces the similar covariance-fitting constraints but differs
in that (i) it uses a Chebyshev basis on a finite interval instead of a
periodic trigonometric basis, avoiding implicit periodic extension and
wrap-around artifacts; (ii) it ties the APS smoothness directly to the
decay of the Chebyshev coefficients, which naturally guides the
truncation order (see discussion in Remark~\ref{remark:smoothness}). 

\section{Nonnegative and Regularized APS Recovery}
\label{sec:Interpretable}

In this section, we develop a robust, convex Chebyshev-based framework for APS recovery in limited-aperture regimes, where the regression matrix can be underdetermined. We show how APS nonnegativity is enforced analytically within a finite-dimensional parametrization, relate the resulting formulation to projection-based approaches, and then incorporate a derivative-based regularizer to promote APS profiles that are smoothly varying yet exhibit a few localized transition angles.

\subsection{Ill-posedness of APS Recovery}

Recall the linear regression model in~\eqref{eq:linear-reg}. In the limited-aperture regime, accurately representing a continuous APS typically requires a high-order Chebyshev expansion (large $p$ such that $\bs{\epsilon} \rightarrow 0$), often with $p>2M$. This places the problem in an \emph{underdetermined} setting: The designed operator cannot uniquely determine coefficients $\bs{a}$ from the available data.

Equivalently,~\eqref{eq:linear-reg} is not identifiable because the sensing matrix $\bs\Phi$ has a nontrivial null space when $p>2M$. Therefore, there exist infinitely many \emph{data-consistent} coefficient vectors,
\[
\hat{\bs a}=\bs a+\bs u,\quad \bs u\in\mathrm{Null}(\bs\Phi),
\]
all producing the same observation $\bs y$, where $\text{Null}(\bs \Phi)$ is the null-space of $\bs{\Phi}$. In practice, different data-consistent solutions may mainly differ by non-physical oscillations (e.g., angular ripples) that are weakly constrained by the measurements.

To obtain a physically meaningful APS estimate, we incorporate additional structure in the sequel to suppress the null-space, namely nonnegativity and an analysis-type prior that selects a plausible solution among the data-consistent candidates.

\subsection{Semidefinite Characterization of Nonnegative APS}
\label{subsec:SDP}

The APS must be nonnegative on the angular interval in order to admit a
physical plausibility. This requirement translates into a pointwise inequality on
the finite-order approximation, i.e.,
\(
	g_{\leq p} (x; \bs{a}) \geq 0,  x \in I,
\)
which is inconvenient to enforce directly because it involves infinitely many constraints over $x \in I$. To obtain a \emph{tractable} formulation, we exploit classical results on nonnegative polynomials (cf. Proposition~\ref{prop:Lukas}), expressing this constraint through
a pair of positive semidefinite matrices that satisfy linear relations
with the Chebyshev coefficients $\bs{a}$. 

To facilitate analysis, we introduce the following definition.

\begin{definition}
	\label{def:nonnegative-cone}
	\textit{(Chebyshev Nonnegative Cone)}
	Assume $p$ is odd and set $p_{\mathrm{half}} := (p+1)/2$.
	Let $N \ge p$ be an integer and
	$\{\nu_j\}_{j=0}^{N} \subset [-1,1]$ denote the Chebyshev zero nodes
	of the polynomial $T_{N+1}$ (cf. Definition~\ref{def:zeros-nodes}).
	Let $\bs{\Psi} \in \mathbb{R}^{(N+1)\times(N+1)}$ be the orthonormal
	discrete cosine transform of type-II (DCT-II) on the Chebyshev zero
	nodes~\cite[Lemma~3.47]{Plonka2023Chebyshev} (see also~\cite[Sec.~8.5]{briggs1995dft}),
	with entries
	\[
	[\bs\Psi]_{j,n}
	= \sqrt{\frac{2-\delta_{n0}}{N+1}}\,
	T_n(\nu_j),
	\quad 0 \le j,n \le N.
	\]
	We define the Chebyshev nonnegative cone $\mathcal{D} \subset \mathbb{R}^{p+1}$ by\footnote{The nonnegativity interpretation of $\mathcal{D}$ will be made explicit in Lemma~\ref{lem:sdp_cheb}.}
	\begin{equation}
		\mathcal{D}
		:= \big\{ \beta(\bm{S}_1, \bm{S}_2) \in \mathbb{R}^{p+1}
		: \bm{S}_1, \bm{S}_2 \in \mathbb{S}^{p_{\mathrm{half}}}_{+} \big\},
		\label{eq:D-def}
	\end{equation}
	where $\mathbb{S}^{p_{\mathrm{half}}}_+$ is the cone of
	$p_{\mathrm{half}} \times p_{\mathrm{half}}$
	positive semidefinite matrices, and
	$\beta:\mathbb{S}^{p_{\mathrm{half}}}_+ \times
	\mathbb{S}^{p_{\mathrm{half}}}_+ \to \mathbb{R}^{p+1}$ is the linear map
	\begin{equation}
		\label{eq:map-define}
		\begin{aligned}
			\beta(\bm{S}_1, \bm{S}_2)
			:=\;& \bs{\Psi}_{p+1}^\top \left[( \bs{1} + \bm{\nu} )\circ
			\diag(\bs{\Psi}_{\half} \bm{S}_1 \bs{\Psi}_{\half}^\top ) \right] \\
			+&\;  \bs{\Psi}_{p+1}^\top \left[( \bs{1} - \bm{\nu} ) \circ
			\diag(\bs{\Psi}_{\half} \bm{S}_2 \bs{\Psi}_{\half}^\top) \right],
		\end{aligned}
	\end{equation}
	where
	\(
	\bs \Psi_{p_{\mathrm{half}}} := \bs\Psi(:,0:p_{\mathrm{half}}-1),
	\bs \Psi_{p+1} := \bs \Psi(:,0:p),
	\)
	and $\bs{\nu} := (\nu_0,\ldots,\nu_N)^\top$,
    $\circ$ denotes the Hadamard product and
	$\diag(\cdot)$ extracts the diagonal of a square matrix as a vector. 
\end{definition}

Based on Definition~\ref{def:nonnegative-cone},
The following Lemma provides a semidefinite characterization of nonnegative APS.
\begin{lemma}[Semidefinite characterization of nonnegative APS]
	\label{lem:sdp_cheb}
	Assume $p$ is odd.
	Consider the truncated Chebyshev expansion
	\[
	g_{\le p}(x;\bs{a})
	= \sum_{n=0}^{p} a_n T_n(x), 
	\quad x \in I := [-1,1],
	\]
	with coefficient vector $\bs{a} := [\sqrt{2}a_0,\ldots,a_p]^\top \in \mathbb{R}^{p+1}$.
	Recall the Chebyshev nonnegative cone $\mathcal{D}$ defined in
	Definition~\ref{def:nonnegative-cone}. Then the following two
	statements are equivalent:
	\begin{enumerate}
		\item $g_{\le p}(x;\bs{a}) \ge 0$ for all $x \in I$;
		\item $\bs{a} \in \mathcal{D}$, i.e., there exist two positive semidefinite
		matrices $\bm{S}_1,\bm{S}_2\succeq \bs{0}$
		such that $\bs{a} = \beta(\bm{S}_1,\bm{S}_2)$
		(cf.~\eqref{eq:D-def} and~\eqref{eq:map-define}).
	\end{enumerate}
	In particular, the infinite family of pointwise nonnegativity
	constraints $g_{\le p}(x;\bs{a}) \ge 0$ for all $x \in I$ admits an
	exact semidefinite representation in the coefficient space: It is
	equivalent to the existence of decision matrices
	$\bm{S}_1,\bm{S}_2 \succeq \bm{0}$ and the linear relations encoded
	by the map $\beta$.
\end{lemma}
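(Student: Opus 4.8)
The plan is to reduce the claim to the classical Markov--Lukács characterization of nonnegative univariate polynomials on an interval (Proposition~\ref{prop:Lukas}, cf.~\cite[Theorem~1.21.1]{szego2003orthogonal}) together with a Gram-matrix (sum-of-squares) parametrization, and then to transport that characterization into the Chebyshev-coefficient domain by sampling at the Chebyshev zero nodes. Concretely: since $p$ is odd, Markov--Lukács states that a polynomial $h$ of degree $\le p$ is nonnegative on $I=[-1,1]$ if and only if $h(x)=(1+x)\sigma_1(x)+(1-x)\sigma_2(x)$ for some sums of squares $\sigma_1,\sigma_2$ of degree $\le 2(p_{\mathrm{half}}-1)=p-1$; the ``if'' part is immediate from $1\pm x\ge 0$ and $\sigma_i\ge 0$ on $I$, while the ``only if'' part is the deep classical statement. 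Each such $\sigma_i$ can be written as $\sigma_i(x)=\bs{v}(x)^\top \bm{S}_i'\,\bs{v}(x)$ with $\bs{v}(x)=[T_0(x),\dots,T_{p_{\mathrm{half}}-1}(x)]^\top$ a basis of the degree-$\le p_{\mathrm{half}}-1$ polynomials and $\bm{S}_i'\in\mathbb{S}^{p_{\mathrm{half}}}_{+}$, and this Gram form parametrizes precisely the degree-$\le p-1$ sums of squares.

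Next I would pass to the node grid $\{\nu_j\}_{j=0}^{N}$, the zeros of $T_{N+1}$ (Definition~\ref{def:zeros-nodes}), with $N\ge p$. Because both $g_{\le p}(\cdot;\bs{a})$ and $(1+x)\sigma_1+(1-x)\sigma_2$ have degree $\le p\le N$, the polynomial identity $g_{\le p}(x;\bs{a})=(1+x)\sigma_1(x)+(1-x)\sigma_2(x)$ is equivalent to equality at the $N+1$ distinct nodes. On that grid, $\diag(\bs{\Psi}_{p_{\mathrm{half}}}\bm{S}_i\bs{\Psi}_{p_{\mathrm{half}}}^\top)$ is (up to the built-in DCT weights, which can be absorbed into $\bm{S}_i$ since congruence by an invertible diagonal matrix preserves $\mathbb{S}^{p_{\mathrm{half}}}_{+}$) the value vector of such a sum of squares $\sigma_i$; componentwise multiplication by $\bs{1}\pm\bm{\nu}$ produces the values of $(1\pm x)\sigma_i(x)$; and $\bs{\Psi}_{p+1}^\top$ is the forward DCT-II that, thanks to $\bs{\Psi}^\top\bs{\Psi}=\bm{I}_{N+1}$ and exact Chebyshev quadrature~\cite[Lemma~3.47]{Plonka2023Chebyshev}, returns the (DCT-normalized) Chebyshev coefficients of a degree-$\le p$ polynomial from its node values. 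Tracking the normalization---the $\sqrt{2}$ weight on $a_0$ compensates the different DCT scalings at index $0$ versus indices $\ge 1$, and an overall positive scalar is irrelevant because $\mathbb{S}^{p_{\mathrm{half}}}_{+}$ is a cone---one finds that $\bs{a}=\beta(\bm{S}_1,\bm{S}_2)$ holds for some $\bm{S}_1,\bm{S}_2\succeq\bm{0}$ precisely when a positive multiple of $g_{\le p}(\cdot;\bs{a})$ admits the Markov--Lukács decomposition above. Combining with the first step, $\bs{a}\in\mathcal{D}$ if and only if $g_{\le p}(x;\bs{a})\ge 0$ on $I$, which is exactly the asserted semidefinite representation (the ``in particular'' clause is then just a restatement, since $\mathcal{D}$ is by definition the image of $\beta$).

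I expect the main obstacle to be the normalization bookkeeping in this coefficient-domain passage: one must verify that restricting $\bs{\Psi}$ to its first $p_{\mathrm{half}}$ columns really realizes ``evaluate a degree-$(p_{\mathrm{half}}-1)$ Gram form at the nodes'', that $\bs{\Psi}_{p+1}^\top$ realizes the exact forward DCT-II on polynomials of degree $\le p$ (which is where $N\ge p$ and the simple-zero structure of $T_{N+1}$ enter), and that the various $\sqrt{N+1}$ and $\sqrt{2}$ factors combine so that $\beta$ returns precisely $\bs{a}=[\sqrt{2}a_0,a_1,\dots,a_p]^\top$ rather than a rescaled vector. A related conceptual point worth stressing is that $\beta$ reconstructs the \emph{entire} polynomial, not merely its samples; this is what upgrades nonnegativity at the $N+1$ nodes (which would follow trivially from $1\pm\nu_j\ge 0$) to nonnegativity on the whole interval $I$, and it crucially relies on the interpolation exactness afforded by $\deg\le p\le N$.
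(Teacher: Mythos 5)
Your proposal is correct and follows essentially the same route as the paper: Markov--Luk\'acs for odd $p$ gives the $(1+x)\sigma_1+(1-x)\sigma_2$ decomposition, the Gram/sum-of-squares parametrization supplies the PSD matrices, and the passage to Chebyshev coefficients is carried out exactly as in the paper by sampling at the zeros of $T_{N+1}$ and inverting with the orthonormal DCT-II, with the scalar normalizations absorbed into $\bm{S}_1,\bm{S}_2$ since $\mathbb{S}^{p_{\mathrm{half}}}_{+}$ is a cone. Your closing observation---that the exact interpolation for degree $\le p\le N$ is what upgrades node-wise nonnegativity to nonnegativity on all of $I$---is precisely the point on which the paper's argument also hinges.
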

\begin{proof}
	See Appendix~\ref{appendix-pf-lemma-SDP} (the construction in Definition~\ref{def:nonnegative-cone} is in fact guided by the proof of this Lemma).
\end{proof}

\begin{remark}
	Note that the matrix-vector product involving $\bs{\Psi}$ can be evaluated in
	$\mathcal{O}(N \log N)$ operations using FFT-based DCT routines~\cite[Algorithm 6.22]{Plonka2023Chebyshev}, and
	may be convenient for implementing the nonnegativity in the resulting SDP.
\end{remark}

By Lemma~\ref{lem:sdp_cheb}, the nonnegativity constraint $g_{\le p}(x;\bs{a}) \ge 0$ on $I$ is
equivalently enforced by requiring that the Chebyshev coefficient
vector $\bs a$ belongs to the semidefinite-representable set
$\mathcal{D}$ in~\eqref{eq:D-def}. In other words, nonnegativity of
the APS is captured through the existence of two positive semidefinite
matrices $\bs S_1, \bs S_2 \in \mathbb{S}^{p_{\mathrm{half}}}_+$ satisfying
$\bs a = \beta( \bs S_1, \bs S_2)$. Combining this characterization with the linear
regression model in~\eqref{eq:linear-reg}, we arrive at the
following convex covariance-fitting problem for nonnegative APS
recovery.

\noindent
\textbf{Optimization Problem 1 (P-1):}
\begin{equation}
\begin{aligned}
  \underset{ \bs a \in \mathbb{R}^{p+1},\, \bs S_1, \bs S_2 \succeq \bs{0}}{\text{minimize}}
  & \quad \frac{1}{2}\big\|\bs{\Phi}\bs{\Pi} \bs a - \bs{y}\big\|_2^2 \\
  \text{subject to}
  & \quad \bs a = \beta( \bs S_1, \bs S_2).
\end{aligned}
\label{eq:constrainedOpt-nonneg}
\end{equation}
Problem~\eqref{eq:constrainedOpt-nonneg} is a convex SDP with
linear matrix inequality constraints in $(\bs S_1, \bs S_2)$ and a quadratic objective in $\bs a$, and can be handled efficiently by standard SDP
solvers.

\subsection{Connection to EAPM Algorithm}
\label{subsec:eapm}

The EAPM method of~\cite{Miretti-2021} can also be cast in the same
transformed-domain formulation $g(x) = \rho(\arcsin x) \in L^2_w(I)$.
Let $\mathcal{V}_{\bs r} \subset L^2_w(I)$ denote the affine set defined
by the covariance constraints (cf.~Lemma~\ref{lem:Affine-sub-space}),
and let
\(
\mathcal{Z} := \{ f \in L^2_w(I) \mid f(x) \ge 0,\ x \in I \}
\)
denote the nonnegative cone. 
{In this unified transformed domain, the
EAPM algorithm~\cite[Sec.~III-B]{Miretti-2021} aims to find a feasible solution $g_{\mathrm{e}}$, i.e.,}
\begin{equation}
	\text{find } g_{\mathrm{e}} \in L^2_w(I)
	\quad\text{s.t.}\quad
	g_{\mathrm{e}} \in \mathcal{V}_{\bs r} \cap \mathcal{Z}.
	\label{eq:EAPM-feas}
\end{equation}
Algorithmically, this feasibility problem is addressed by an
extrapolated alternating-projection iteration of the form
\begin{equation}
	g^{(t+1)}_{\mathrm{e}}
	= g^{(t)}_{\mathrm{e}} + \mu_t \big[
	\mathbb{P}_{\mathcal{V}_{\bs r}}\big(\mathbb{P}_{\mathcal{Z}}(g^{(t)}_{\mathrm{e}})\big)
	- g^{(t)}_{\mathrm{e}}
	\big],
	\label{eq:EAPM-iter}
\end{equation}
where $\mathbb{P}_{\mathcal{Z}}$ and $\mathbb{P}_{\mathcal{V}_{\bs r}}$ denote the projections onto
$\mathcal{Z}$ and $\mathcal{V}_{\bs r}$, respectively, and $\mu_t$ is an
extrapolation factor chosen according to~\cite{Bauschke2006Extrapolation,
	Miretti-2021,Miretti-2018}. The projection onto $\mathcal{V}_{\bs r}$
admits a closed-form expression via the PLV characterization in
Proposition~\ref{prop:PLV-equivalent}, so that
\eqref{eq:EAPM-iter} can be implemented explicitly in the space
$L^2_w(I)$.


\begin{remark}[EAPM versus Chebyshev--SDP formulation]
\label{rem:EAPM-vs-cheb}
The EAPM algorithm and the proposed Chebyshev–SDP framework enforce
the same physical constraints, i.e., APS
nonnegativity, but from different viewpoints.
\begin{itemize}
    \item Within the unified transformed-domain
framework, we highlight the conceptual difference between EAPM,
which treats nonnegative APS recovery as an \emph{infinite-dimensional}
feasibility problem solved by alternating projections, and the
Chebyshev-based approach developed in Section~\ref{subsec:SDP}, which operates with a \emph{finite-dimensional} parametrization and a convex optimization formulation.
\item 
The numerical implementation of EAPM necessarily
relies on a finite discretization of the angular domain.
As a consequence, both nonnegativity and covariance consistency are
enforced only up to the accuracy of the underlying grid
resolution, and the algorithmic behavior can be sensitive to the choice
of discretization and step-size sequence $\{\mu_t\}$ in
\eqref{eq:EAPM-iter}.
\end{itemize}

\end{remark}

\subsection{Weighted Multi-Order Derivative Regularization}

Under limited aperture ($M<\infty$), APS recovery is typically ill-posed: Many data-consistent solutions exist and mainly differ in spurious angular ripples. We target a {clustered, piecewise-smooth} APS---smooth within clusters and with a few transition angles---consistent with clustered APS models~\cite{zhong2020fdd,song2020deep}.

This prior encourages angular clusters to follow a low-order (piecewise-polynomial) profile, while concentrating the regularization cost on transition angles.
This interpretation is consistent with higher-order total variation (TV), most notably total generalized variation (TGV): The order-$k$ regularization is explicitly built to control the $k$-th order derivative, and vanishes on polynomials of degree $<k$~\cite[Proposition~3.3]{bredies2010total}. Moreover, this regularization is governed mainly by derivative jumps across transitions~\cite[Remark~3.7]{bredies2010total}.

Motivated by the above, we regularize the sampled APS via an analysis-type prior~\cite{Elad2007}.
Let $\bs\rho\in\mathbb{R}^K$ denote $K$ APS samples on a uniform grid with spacing $\Delta$, and let $\bs D$ be a first-order (non-circulant) difference operator ($\bs{D}$ is lower-triangular, refer to~\cite{Interpretable-FLR-2009}). Define the $n$-th order discrete derivative proxy
\begin{equation}
	\bs\rho^{(n)} := \Delta^{-n}\bs D^n\bs\rho,\qquad n\ge 0,
\end{equation}
and the weighted multi-order response
\begin{equation}
	\label{eq:Q_def_concise}
	\bs\xi := \sum_{n=0}^{K}\eta_n\bs\rho^{(n)} \;=\; \bs Q_{\eta}\bs\rho,
	\quad
	\bs Q_{\eta} := \sum_{n=0}^{K}\eta_n\Delta^{-n}\bs D^n ,
\end{equation}
with $\bs D^{K+1}=\bs 0$. We penalize $\|\bs\xi\|_1$ to promote sparsity of the detected variations (cf.\ $\ell_1$ trend filtering in~\cite{kim2009ell_1}), yielding smooth interiors while preserving a few transitions~\cite{Stefan2010}.

We adopt the geometric weighting $\eta_n = \eta^{n-n_0}$ with $\eta >0$ for $n \geq n_0$, and $\eta_n=0$ otherwise,
which emphasizes orders $\ge n_0$. Setting $n_0=2$ reduces sensitivity to the overall power level and alleviates staircasing by promoting low-order interiors~\cite{chan2000high}.

Using the nilpotency of $\bs D$ and the finite Neumann series~\cite{HornJohnson2013,GolubVanLoan2013}, \eqref{eq:Q_def_concise} implies
\begin{equation}
	\label{eq:Q_neumann_concise}
	\bs Q_{\eta}
	=
	\Delta^{-n_0}\bs D^{n_0}\big(\bs I-\tfrac{\eta}{\Delta}\bs D\big)^{-1},
\end{equation}
The matrix $\bs{Q}_{\eta}$ is effectively banded for a proper ratio $\eta/\Delta$.

\begin{remark}
	\textit{(FIR and sparsity viewpoint.)}
	Because $\bs Q_{\eta}$ is effectively banded, $\bs\xi=\bs Q_{\eta}\bs\rho$ acts as a window-like FIR high-pass detector along the grid. Penalizing $\|\bs\xi\|_1$ suppresses high-frequency ripples at most angles (smooth interiors) while allowing a few localized large responses (cluster transitions). Emphasizing higher-order terms increases sensitivity to curvature-like changes and promotes low-order interiors, consistent with higher-order TV effects~\cite{chan2000high,kim2009ell_1} in image processing.
\end{remark}

Combining this analysis prior with the nonnegative constraint in Lemma~\ref{lem:sdp_cheb} yields:

\noindent\textbf{Optimization Problem 2 (P-2):}
\begin{equation}
	\label{eq:P2_clust_concise}
	\begin{aligned}
		\underset{\bs a,\bs\xi,\bs S_1,\bs S_2}{\text{minimize}}
		&\quad
		\frac{1}{2}\big\|\bs{\Phi}\bs{\Pi}\bs a-\mathbf{y}\big\|_2^2
		+\lambda\|\bs\xi\|_1\\
		\text{subject to}
		&\quad \bs a=\beta(\bs S_1,\bs S_2),\ \bs S_1,\bs S_2\succeq \bs 0,\\
		&\quad \bs\xi=\bs Q_{\eta}\bs C\bs a,
	\end{aligned}
\end{equation}
where $\bs C\in\mathbb{R}^{K\times(p+1)}$ evaluates the Chebyshev basis at $\{\sin\theta_k\}_{k=0}^{K-1}$ (cf.~\eqref{eq:rho-expansion}) with
\[
[\bs C]_{k,n} = \frac{1}{\sqrt{1 + \delta_{n0}}} T_n(\sin \theta_k), \quad 0 \leq n \leq p,
\]
where $\{\theta_k\}_{k=0}^{K-1}$ are the uniform girds, so that $\bs\rho=\bs C\bs a$. Here $\lambda\ge 0$ controls the tradeoff between data consistency and structural regularity.







\begin{figure*}[htbp]
	\centering
	\subfloat[Multi-Gaussian]{\includegraphics[width=0.31\textwidth]{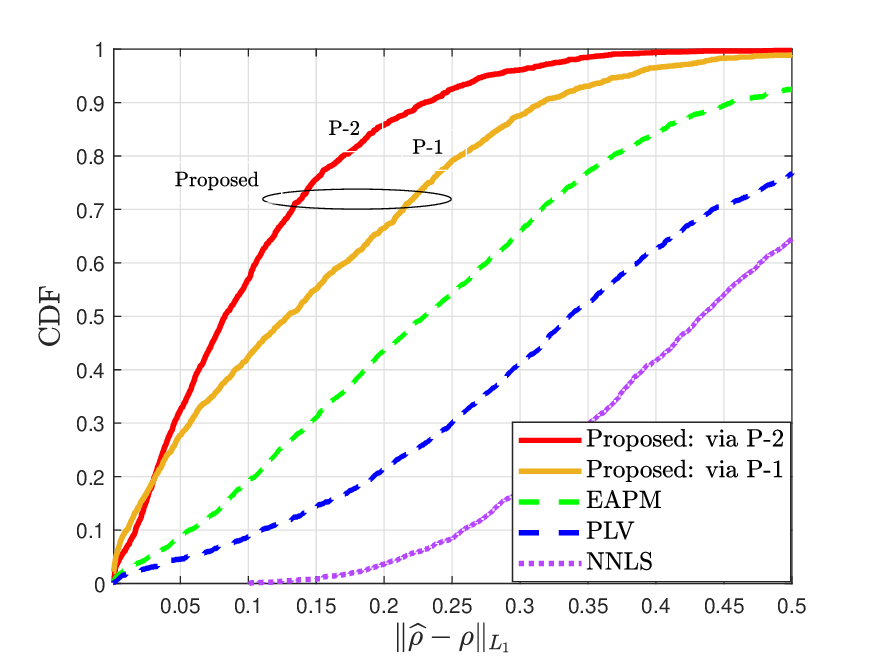}}
	\hfil
	\subfloat[{Multi-Sinc$^2$}]{\includegraphics[width=0.31\textwidth]{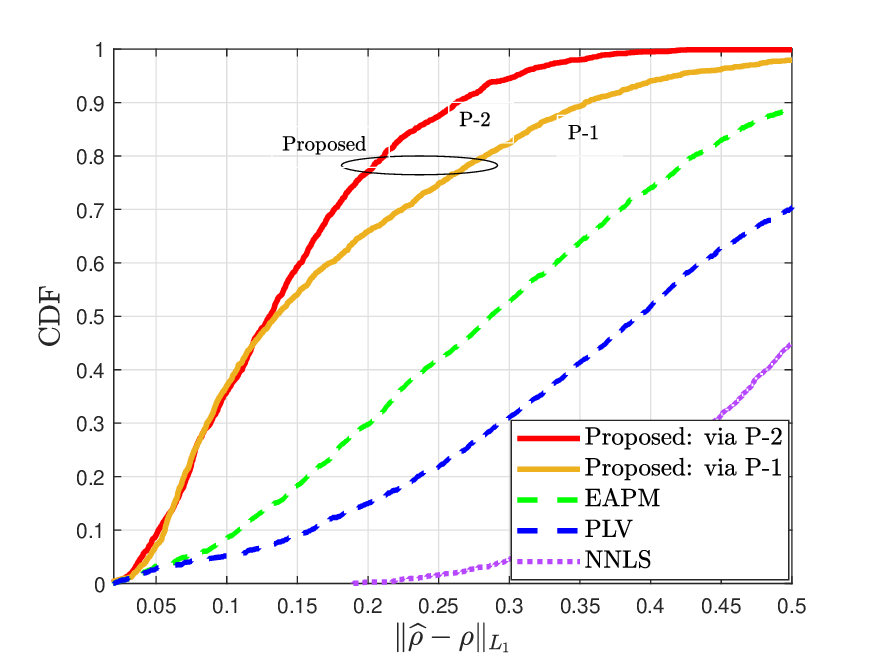}}
	\hfil
	\subfloat[{Multi-Laplacian}]{\includegraphics[width=0.31\textwidth]{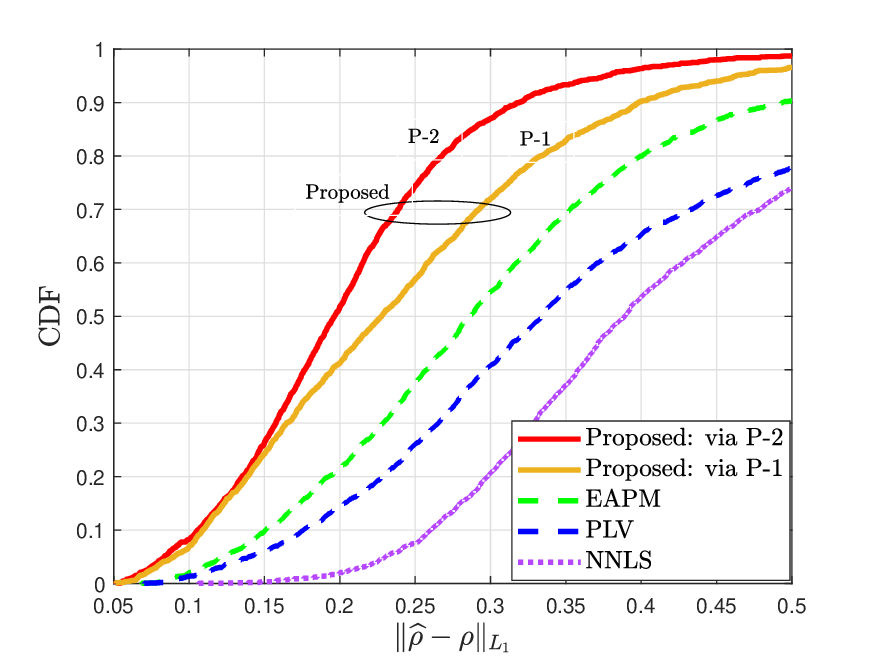}}
	\caption{The $L_1$ distortion performance comparisons of different continuous APS estimators under three APS models. }
	\label{fig:my_figure1}
\end{figure*}

\begin{figure*}[htbp]
	\centering
	\subfloat[Multi-Gaussian]{\includegraphics[width=0.31\textwidth]{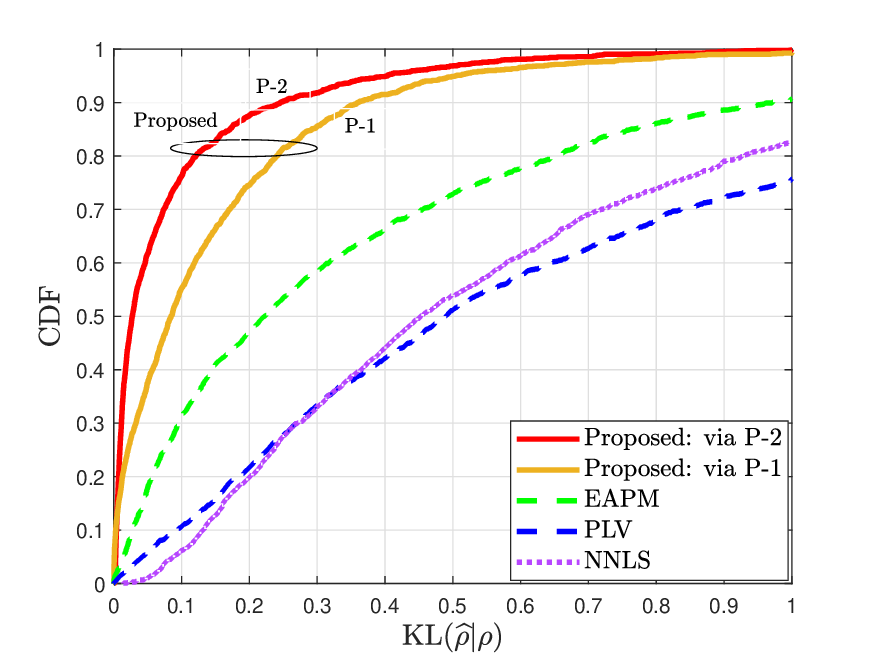}}
	\hfil
	\subfloat[{Multi-Sinc$^2$}]{\includegraphics[width=0.31\textwidth]{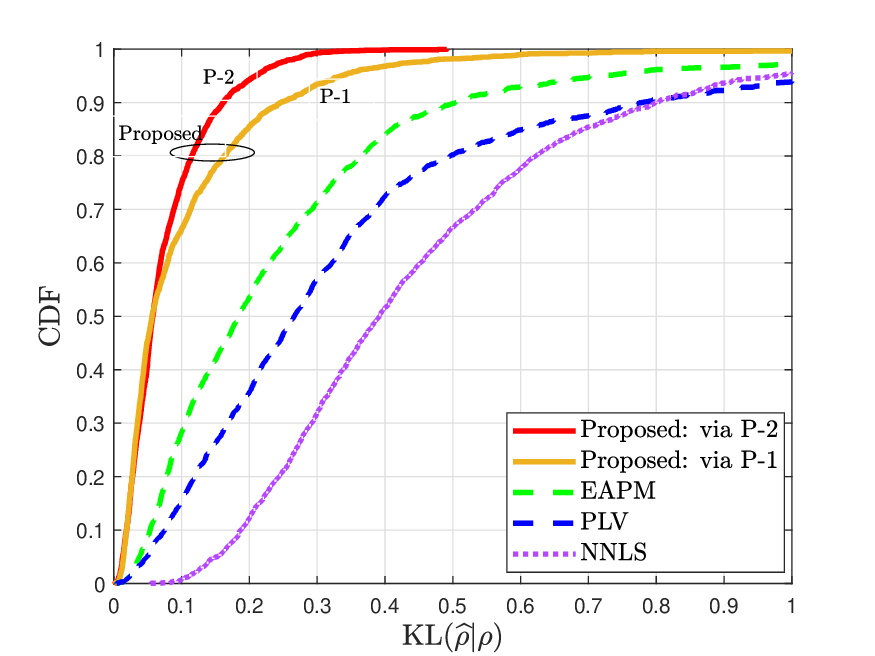}}
	\hfil
	\subfloat[{Multi-Laplacian}]{\includegraphics[width=0.31\textwidth]{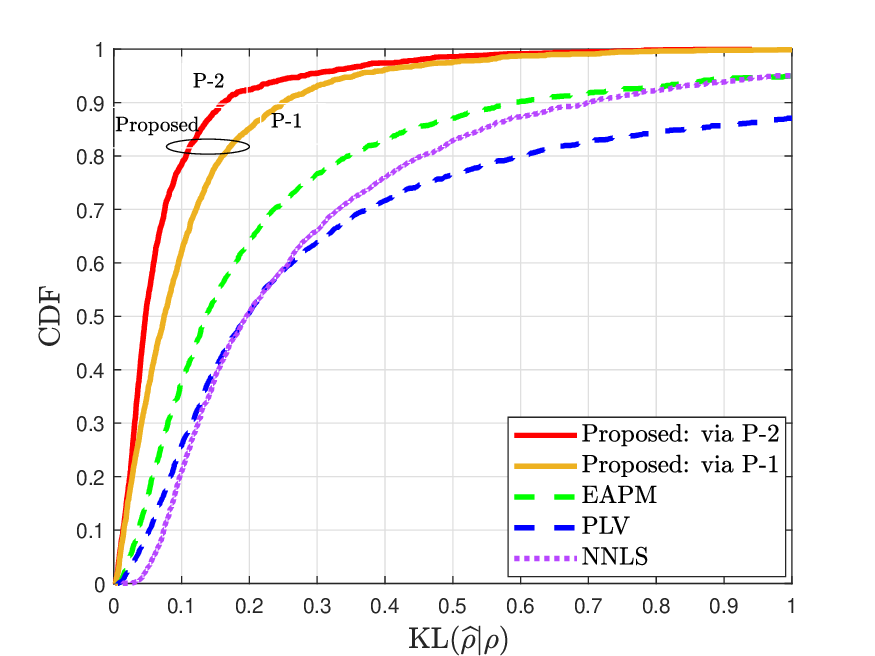}}
	\caption{The perceptional performance comparisons (via KL divergence) of different continuous APS estimators. }
	\label{fig:my_figure2}
\end{figure*}

\section{Simulations}
\label{Sec:simulation}

This section shows the performance of the proposed
Chebyshev-based APS recovery framework through numerical
simulations. We first consider synthetic clustered APS models to
quantify reconstruction accuracy from a finite number of covariance
coefficients and to compare against some baselines.
We then examine robustness to within-cluster perturbations and assess
the impact on FDD downlink covariance conversion in a 3GPP-based
macro-cell scenario~\cite{3GPP2024TR38901}. 

\subsection{APS Recovery Under Clustered Models}

We first examine APS recovery in controlled settings where the
true APS is generated from clustered analytical models. Following~\cite{Miretti-2021,Bameri-2023,Ninomiya-2022,Kaneko-2024}, we represent
\(
  \rho(\theta)
  =
  \sum_{q=1}^{Q} c_q f_q(\theta),
\)
where $Q$ is the number of clusters, $c_q > 0$ denotes the power
of the $q$-th cluster, and $f_q(\theta)$ models its angular shape/location. Consider three clustered APS
models:
\begin{itemize}
  \item \emph{Gaussian}:
        $f_q(\theta) = \exp\!(-(\theta-\mu_q)^2/(2\sigma_q^2))$,
        where $\mu_q$ and $\sigma_q$ denote the cluster center and
        angular spread~\cite{Miretti-2021,Bameri-2023}.
  \item \emph{Sinc$^2$}:
        $f_q(\theta) = \mathrm{sinc}^2\!((\theta-\mu_q)/(2\sigma_q))$,
        where $\mu_q$ is the center, and $\sigma_q$ controls the main-lobe width and the function
        exhibits rapidly decaying side-lobes.
  \item \emph{Laplacian}:
        $f_q(\theta) = \exp\!(-|\theta-\mu_q|/\sigma_q)$,
        with center $\mu_q$ and spread parameter $\sigma_q$; see also
        \cite[Sec.~IV-A]{Kaneko-2024}.
\end{itemize}

\emph{Basic setup.} 
 We set $\gamma=1$ (corresponding to
half-wavelength antenna spacing \cite{Miretti-2021,Bameri-2023}),
the number of antennas $M=8$. The APS is normalized to have unit mass (see~\cite{Bameri-2023}),
\(
  \int_{\Theta} \rho(\theta)\,\mathrm{d}\theta = 1,
\)
which is implemented by normalizing, i.e., $r_m \leftarrow r_m/r_0$.

\begin{table}[t]
\caption{Parameter settings for angular centers and spreads (angles in degrees, then converted to radians in simulations).}
\label{Tab:angular-setting}
\centering
\renewcommand{\arraystretch}{1.2}
\begin{tabular}{l p{0.36\columnwidth} p{0.29\columnwidth}}
\toprule
Cluster type & Cluster center distribution & Spread distribution \\
\midrule
Multi-Gaussian  & $\mu_q \sim \mathcal{U}(-78^\circ, 78^\circ)$ & $\sigma_q \sim \mathcal{U}(3.0^\circ, 9.0^\circ)$ \\
Multi-Sinc$^2$  & $\mu_q \sim \mathcal{U}(-78^\circ, 78^\circ)$ & $\sigma_q \sim \mathcal{U}(4.8^\circ, 11.2^\circ)$ \\
Multi-Laplacian & $\mu_q \sim \mathcal{U}(-78^\circ, 78^\circ)$ & $\sigma_q \sim \mathcal{U}(2.1^\circ, 10.6^\circ)$ \\
\bottomrule
\end{tabular}
\end{table}

\begin{figure*}[htbp]
	\centering
	\subfloat[Multi-Gaussian]{\includegraphics[width=0.31\textwidth]{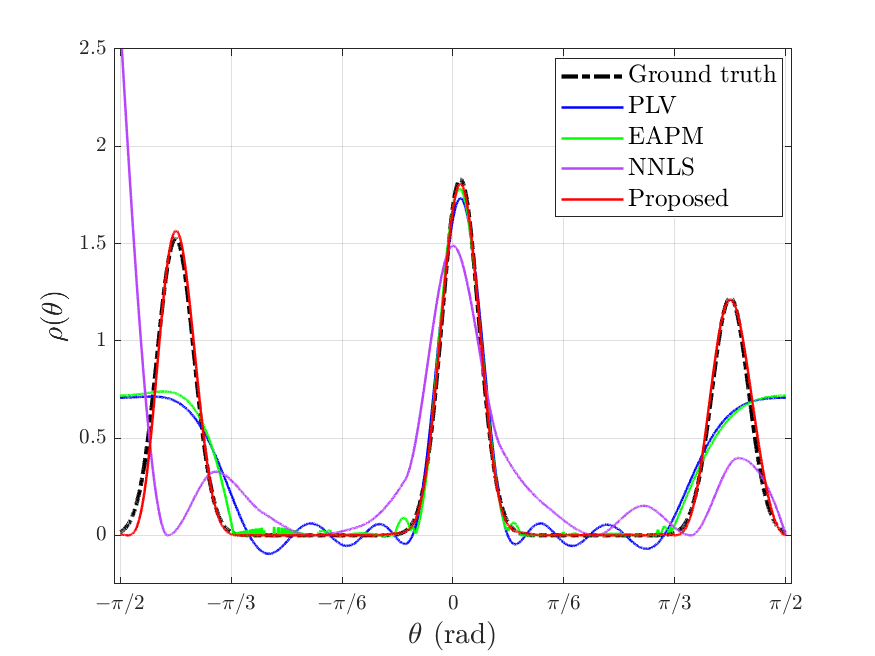}}
	\hfil
	\subfloat[{Multi-Sinc$^2$}]{\includegraphics[width=0.31\textwidth]{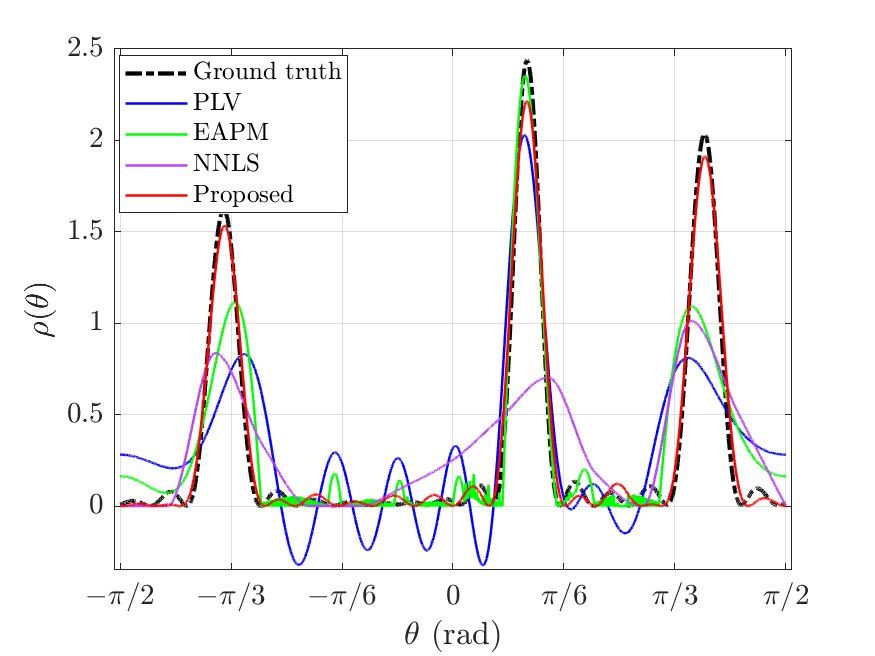}}
	\hfil
	\subfloat[{Multi-Laplacian}]{\includegraphics[width=0.31\textwidth]{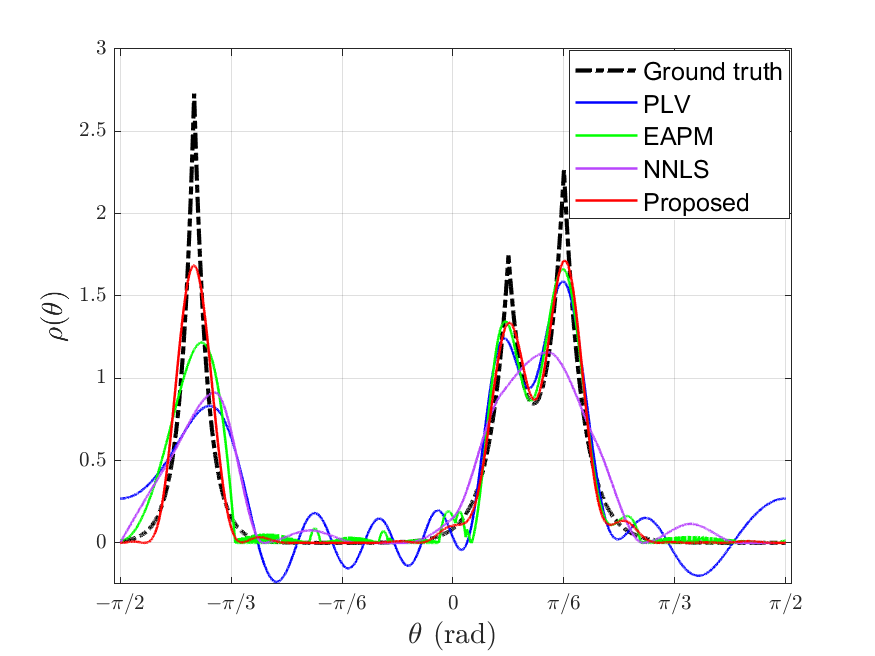}}
	\caption{The visual performance comparisons of different continuous APS estimators under three APS models.}
\label{fig:my_figure}
\end{figure*}

\begin{figure*}[htbp]
	\centering
	\subfloat[Multi-Gaussian]{\includegraphics[width=0.31\textwidth]{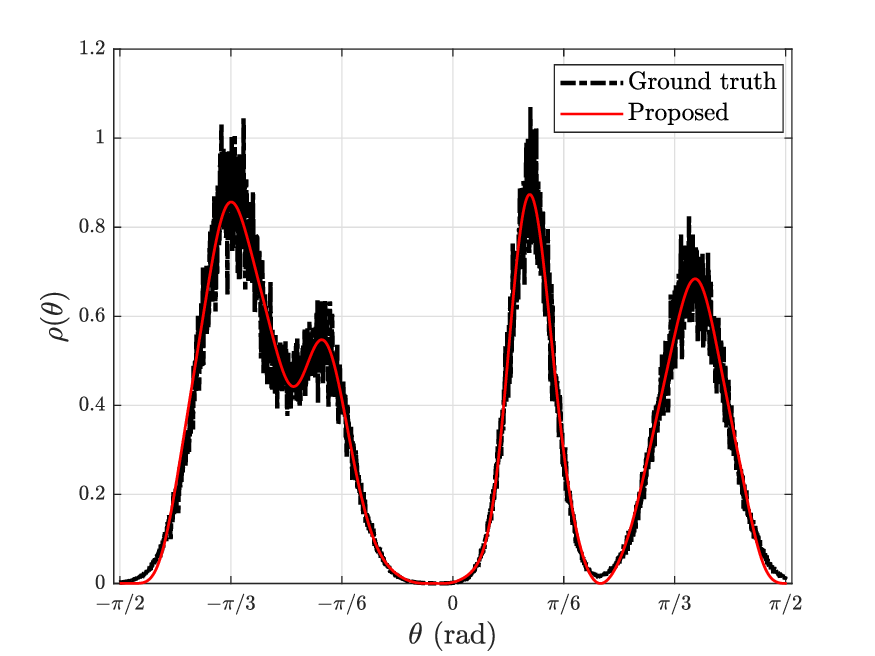}}
	\hfil
	\subfloat[{Multi-Sinc$^2$}]{\includegraphics[width=0.31\textwidth]{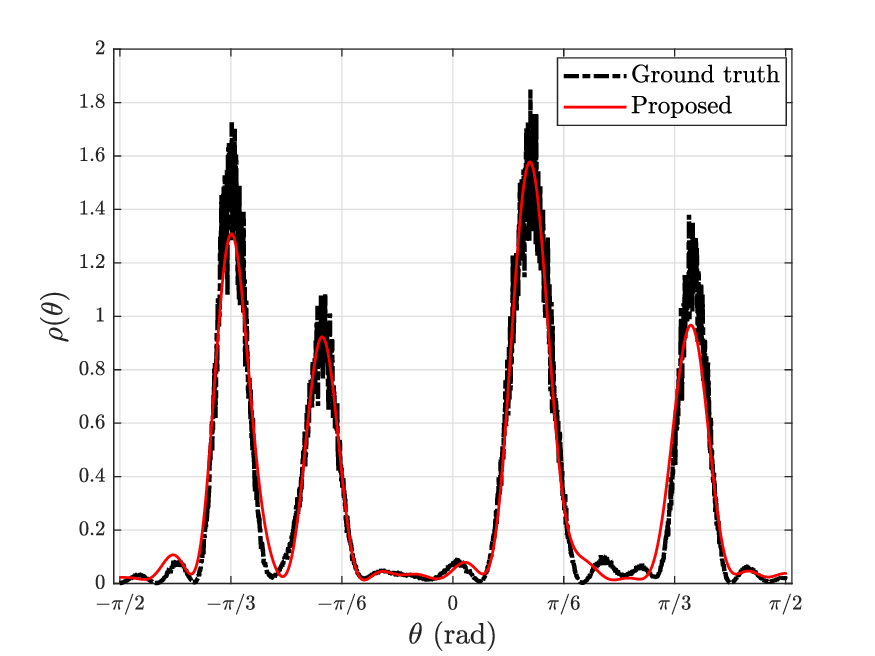}}
	\hfil
	\subfloat[{Multi-Laplacian}]{\includegraphics[width=0.31\textwidth]{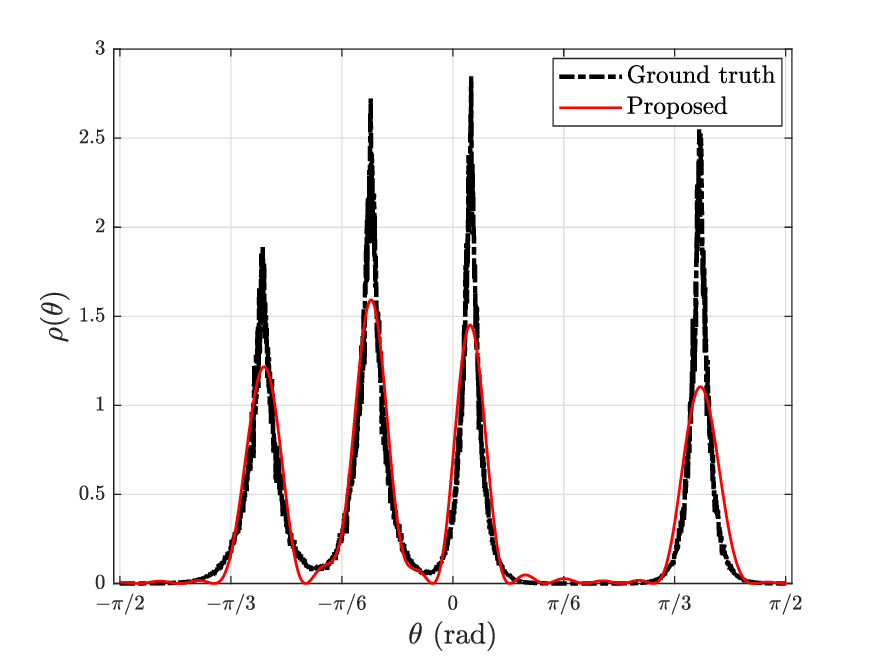}}
	\caption{The visual performance comparisons of different continuous APS models with perturbation level $\sigma = 0.2$. }
	\label{fig:my_figure3}
\end{figure*}

We evaluate APS recovery by three complementary criteria:
\begin{enumerate}
  \item \emph{$L_1$ distortion:} 
        $\|\hat\rho - \rho\|_{L_1}
        := \int_{\Theta} |\hat\rho(\theta)-\rho(\theta)|\,\mathrm{d}\theta$.
  \item \emph{Perceptual loss:} The Kullback--Leibler (KL) divergence
        \(
          \mathrm{KL}(\hat\rho | \rho)
          :=
          \int_{\Theta}
            \hat\rho(\theta)
            \log\! ({\hat\rho(\theta)}/{\rho(\theta)})
          \,\mathrm{d}\theta,
          \)
        where the APS is normalized.
        Since the APS is physically
        nonnegative and can be viewed as a probability density on
        $\Theta$, KL divergence provides a perceptional measure of how well the recovered APS matches the ground
        truth~\cite{zhang2019theoretically,Qian-WeiYu-2025}.
    \item \emph{Visual comparison:} Representative realizations with
    $Q=3$ clusters for different APS models.
\end{enumerate}

\emph{Monte Carlo settings.}
We perform $2000$ Monte-Carlo trials. In each trial, the number of clusters $Q$ is uniformly drawn from $\{1,2,3,4,5\}$, the cluster powers
$c_q \sim \mathcal{U}(0.5,1)$, and the angular centers and spreads
$(\mu_q,\sigma_q)$ are randomly generated based on the ranges
outlined in Table~\ref{Tab:angular-setting} for each APS model. For each realization, we generate the covariance and apply:
\begin{itemize}
  \item The Proposed schemes, i.e., \textbf{P-1} in~\eqref{eq:constrainedOpt-nonneg} and \textbf{P-2} in~\eqref{eq:P2_clust_concise}. Set the model order $p=31$ and $\eta/\Delta=0.2$;
  \item The PLV solution and its EAPM refinement
        \cite{Miretti-2021,Miretti-2018};
  \item The NNLS-based estimator \cite{Haghighatshoar-2018,Barzegar-2019}
        with grid size $2M-1$.
\end{itemize}
For NNLS, the discrete APS estimate on the angular grid is interpolated
to a continuous function to enable a fair comparison with the
continuous-domain methods, following \cite{akima1970new}.

Figs.~\ref{fig:my_figure1} and Fig.~\ref{fig:my_figure2} show that the proposed Chebyshev-based estimator achieves the lowest $\ell_1$ distortion and KL divergence among the compared methods. This indicates that jointly enforcing APS smoothness and nonnegativity within the Chebyshev parametrization yields more accurate APS reconstructions from a limited number of covariance coefficients. 
{In particular, \textbf{P-2} provides the best overall performance and is therefore used in the subsequent experiments.}

Fig.~\ref{fig:my_figure} shows that the proposed Chebyshev-based approach yields APS estimates that closely match the ground-truth clustered profiles across all tested models. By contrast, PLV/EAPM introduce oscillatory artifacts and spurious sidelobes, particularly when clusters are near the angular boundaries, which is consistent with their trigonometric-polynomial parametrization.


\subsection{Robustness to Perturbed Clustered APS}

To assess robustness beyond idealized clustered models, we consider perturbed APS realizations obtained by injecting controlled intra-cluster fluctuations while preserving the dominant cluster supports. Specifically, starting from an analytical clustered profile $\rho(\theta)$, we define
\begin{equation}
	\rho_{\mathrm{pert}}(\theta)
	= \rho(\theta)\,\big(1 + \sigma \mathcal{E}_{\theta}\big)_{+},
	\label{eq:perturbed-APS}
\end{equation}
where $(\cdot)_{+}$ enforces nonnegativity, $\sigma>0$ controls the perturbation level, and $\mathcal{E}_{\theta}$ is a zero-mean, unit-variance correlated process formed by smoothing i.i.d.\ Gaussian samples with a short averaging window (length $6$). 
{This construction retains a clustered support while introducing within-cluster irregularities, serving as a controlled model-mismatch stress test commonly used in robustness evaluations~\cite{chi2011sensitivity,wang2022policy,liu2015robust}.}

\begin{figure}[t]
\centering
\includegraphics[width=\linewidth]{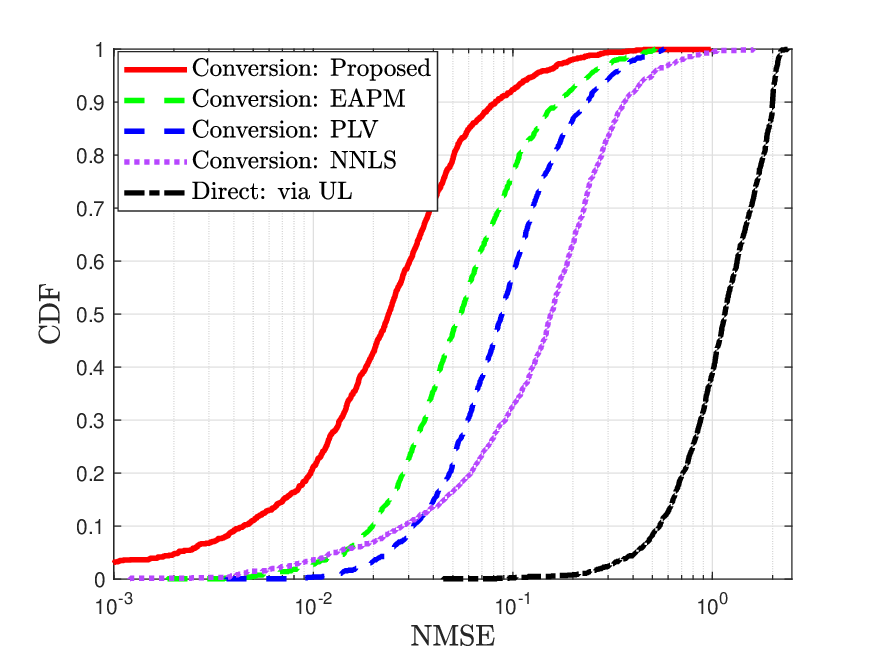}
\caption{Downlink channel covariance conversion NMSE performance in the 3GPP\_38.901\_UMa\_NLOS FDD scenario.}
\label{fig:my_figure-FDD}
\end{figure}

Fig.~\ref{fig:my_figure3} illustrates a representative case with perturbation level
$\sigma = 0.2$ and $M=16$. The proposed Chebyshev-based scheme continues to
recover the main cluster locations and their relative power with high
fidelity, despite the visibly irregular fine-scale variations within
each cluster. These experiments indicate that the combination of
global nonnegativity, smoothness-aware Chebyshev modeling, and
derivative-based regularization yields APS reconstructions that remain
stable under moderate deviations from the ideal clustered model.

\subsection{Application: DL Channel Covariance Conversion in FDD}

We assess the impact of the proposed APS scheme on
DL covariance conversion in a realistic FDD setting. Channel
realizations are generated using the QuaDRiGa toolbox under the
3GPP\_38.901\_UMa\_NLOS scenario~\cite{jaeckel2014quadriga,Jaeckel2023QuaDRiGa,3GPP2024TR38901},
with a BS equipped with an $M=8$ ULA (25\,m height). Users
are randomly distributed over $[-90^\circ,90^\circ]$ in azimuth and
uniformly in distance between 150\,m and 200\,m from the BS. To ensure
the same scattering environments on the UL and DL, we use the same
random seed for both links, so that the clusters (powers, AoAs,
delays) are shared while the carrier frequencies differ
($f_{\mathrm{UL}} = 1.7$\,GHz, $f_{\mathrm{DL}} = 2.5$\,GHz). This
yields distinct UL/DL covariance matrices corresponding to the same
underlying APS.

For each user, the empirical channel covariance matrices are computed from $N_{\text{snap}} = 128$ channel snapshots. The
competing methods are allowed to use only the UL sample covariance to
predict the DL covariance. The proposed Chebyshev–SDP estimator first
recovers the Chebyshev series from the UL covariance and then maps it to the DL covariance via Theorem~\ref{theorem:CB-parseval}. Following~\cite{Miretti-2021}, performance is evaluated using the normalized mean-squared error (NMSE) between the converted and true DL covariance matrices, averaged over 2000 users.

Fig.~\ref{fig:my_figure-FDD} shows that the proposed estimator
consistently attains lower NMSE 
than the baselines. In particular,
despite the non-ideal scattering captured by the UMa\_NLOS model, the
Chebyshev-based conversion leads to a more accurate prediction of the DL covariance.


\section{Conclusion}
\label{sec:conclusion}

This paper developed a Chebyshev-based framework for continuous APS
recovery from channel covariance. By working in a transformed
domain and expanding the APS in a Chebyshev basis on a finite interval,
we obtained an explicit series representation of the covariance
entries and a truncated linear regression formulation whose
approximation error is directly tied to the decay of the Chebyshev
coefficients, and hence to the smoothness of the APS. Based on this
representation, we derived an exact semidefinite characterization of
nonnegative APS and  incorporated a derivative-based $\ell_{1}$ regularizer to suppress spurious angular ripples while preserving clustered transitions.
Simulations with synthetic clustered spectra and a 3GPP-based FDD setting demonstrate improved APS reconstruction and downlink covariance conversion compared with existing baselines, highlighting the utility of the proposed approach for covariance-domain processing in multi-antenna systems.

\appendices

\section{Proof of Theorem~\ref{theorem:CB-parseval}}
\label{appendix-pf-theorem:CB-parseval}

The following proposition will be useful for the proof. 
\begin{proposition}[Parseval Identity for Chebyshev Expansion]
	\label{prop:parseval}
	~\cite[Theorem 6.10]{Plonka2023Chebyshev}
	Let $g\in L^2_w$ be Lipschitz continuous with Chebyshev expansion $g(x)=\sum_{n=0}^\infty a_n T_n(x)$.
	Then, the following holds:
	\begin{align}
		\Vert g \Vert^2_w
		= \pi |a_0|^2 + \frac{\pi}{2}\sum_{n=1}^\infty |a_n|^2.
		\label{eq:parseval}
	\end{align}
\end{proposition}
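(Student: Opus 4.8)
The plan is to derive the Parseval identity from the orthogonality relation of Lemma~\ref{def:Chebyshev-Polynomials} through a partial-sum argument, using the Lipschitz hypothesis only to supply the convergence needed to pass to the limit. Since the inner product in Definition~\ref{def:weighted-L2} carries no conjugation and $g$ is real-valued, all coefficients $a_n$ are real, so $|a_n|^2=a_n^2$ throughout; the complex case would be identical with $\overline{a_m}\,a_n$ in place of $a_m a_n$.

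First I would invoke Lemma~\ref{prop:convergence}: because $g$ is Lipschitz continuous on $[-1,1]$, its Chebyshev series $\sum_{n\ge0}a_n T_n$ converges absolutely and uniformly to $g$. Writing $S_N:=\sum_{n=0}^{N}a_n T_n$ for the partial sum, this gives $\|g-S_N\|_\infty\to0$. The next observation is that the Chebyshev weight is integrable, with $\int_{-1}^{1}w(x)\,\dd x=\pi$ (the $m=n=0$ case of the orthogonality relation in Lemma~\ref{def:Chebyshev-Polynomials}). Uniform convergence therefore upgrades to convergence in the weighted norm, since $\|g-S_N\|_w^2=\int_{-1}^{1}|g-S_N|^2\,w\,\dd x\le\|g-S_N\|_\infty^2\int_{-1}^{1}w\,\dd x=\pi\,\|g-S_N\|_\infty^2\to0$, and continuity of the norm then yields $\|S_N\|_w^2\to\|g\|_w^2$.

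The second step evaluates $\|S_N\|_w^2$ exactly. As $S_N$ is a finite linear combination, I would expand $\|S_N\|_w^2=\sum_{m=0}^{N}\sum_{n=0}^{N}a_m a_n\,\langle T_m,T_n\rangle_w$ — note that no interchange of an infinite sum with the integral is needed here, which is the whole purpose of working with partial sums — and apply the orthogonality relation of Lemma~\ref{def:Chebyshev-Polynomials}: every off-diagonal term vanishes, the $(0,0)$ term contributes $\pi a_0^2$, and each diagonal term with $n\ge1$ contributes $\tfrac{\pi}{2}a_n^2$. Hence $\|S_N\|_w^2=\pi a_0^2+\tfrac{\pi}{2}\sum_{n=1}^{N}a_n^2$. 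Letting $N\to\infty$ and combining with the limit established above produces exactly the identity~\eqref{eq:parseval}.

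The only delicate point — and the reason I route the argument through partial sums rather than formally squaring the infinite series — is the justification of exchanging the infinite summation with the weighted integral against the singular weight $w$. The partial-sum route confines this interchange to the single clean limit $\|S_N\|_w\to\|g\|_w$, which follows from the elementary estimate in the first step; the remaining work is a finite application of orthogonality and poses no further obstacle. (Alternatively, one could cite the abstract fact that $\{T_n\}$ is a complete orthogonal system in $L^2_w$ and invoke Hilbert-space Parseval directly, but the partial-sum derivation keeps the proof self-contained within the lemmas already established.)
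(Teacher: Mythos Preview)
Your proof is correct: the passage from uniform convergence to $L^2_w$ convergence via the bound $\|g-S_N\|_w^2\le\pi\,\|g-S_N\|_\infty^2$ is exactly what is needed to justify the limit, and the finite orthogonality computation is routine. However, the paper does not actually prove this proposition at all --- it is stated with a citation to \cite[Theorem~6.10]{Plonka2023Chebyshev} and invoked as a known result in the proof of Theorem~\ref{theorem:CB-parseval}. So there is no ``paper's own proof'' to compare against; your argument supplies a self-contained derivation where the paper is content to cite the literature.
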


\textbf{Proof of (i)}. 
We expand the exponential term $e^{\ii\kappa x}$ in \eqref{eq:r-fun} via the Chebyshev basis $\{T_n\}$ and compute the related coefficients (Chebyshev series). Define
\[
\begin{split}
	b_n(\kappa)
	&:=
	\frac{2-\delta_{n0}}{\pi}\int_{-1}^{1} e^{\ii\kappa x}\,T_n(x)\,w(x)\,\dd x.
\end{split}
\]
Apply the substitution $x=\cos\varphi$ with $\varphi\in[0,\pi]$; then $w(x)\,\dd x=\dd\varphi$ and $T_n(x)=\cos(n\varphi)$. Hence
\[
\begin{split}
	b_n(\kappa)
	&=\frac{2-\delta_{n0}}{\pi}\int_{0}^{\pi} e^{\ii\kappa\cos\varphi}\,\cos(n\varphi)\,\dd\varphi.
\end{split}
\]
The classical integral identity (see \cite[Eq.(10.9.2)]{NISTDLMF}) implies
\begin{align}
	\int_{0}^{\pi} e^{\ii\kappa\cos\varphi}\,\cos(n\varphi)\,\dd\varphi
	=\pi\,\ii^{\,n}\,J_n(\kappa),
	\ \ n\in\mathbb{N},
	\label{eq:bessel-cos-int}
\end{align}
where $J_n(\cdot)$ is the Bessel function of the first kind of order $n$. It
immediately yields $b_0(\kappa)=J_0(\kappa)$ and $b_n(\kappa)=2\,\ii^{\,n}\,J_n(\kappa) ,n\ge 1$.
Therefore, for every fixed $\kappa\in\mathbb{R}$,
\begin{align}
	\label{eq:plane-cheb-series}
	e^{\ii\kappa x}
	=
	J_0(\kappa)
	+
	2\sum_{n=1}^{\infty} \ii^{\,n}J_n(\kappa)\,T_n(x),
	\qquad x \in I.
\end{align}
Substituting $g=\sum_{n\ge 0}a_nT_n$ and \eqref{eq:plane-cheb-series} into \eqref{eq:r-fun}, we have 
\[
\begin{aligned}
	& r (\kappa)  \\
	=&
	\int_{I}
	\Big(\sum_{j=0}^{\infty} a_j T_j(x)\Big)
	\Big(J_0(\kappa)+2\sum_{n=1}^{\infty}\ii^{\,n}J_{n}(\kappa)T_{n}(x)\Big)
	w(x)\,\dd x\\
	\overset{(a)}{=}&a_0\,J_0(\kappa)\,\underbrace{\langle T_0, T_0 \rangle_{w}}_{=\pi}
	+
	\sum_{n=1}^{\infty} a_n\,(2\ii^{\,n}J_n(\kappa))\,\underbrace{\langle T_n, T_n \rangle_{w}}_{=\pi/2}
	\notag\\
	&=
	\pi\Big[a_0\,J_0(\kappa)+\sum_{n=1}^{\infty}\ii^{\,n}a_n J_n(\kappa)\Big],\quad\forall \kappa\in\mathbb{R},
\end{aligned}
\]
where step $(a)$ follows from the orthogonality of Chebyshev polynomials \eqref{eq:OrthoPoly}, (hence only terms with matching indices survive), and the interchange of summation and integration. (This is justified by the absolute convergence of the series, which is further due to:
$\sum_{n\ge 0}|a_n|^2<\infty$ (by $g\in L^2_w$) and
the Bessel energy identity \cite[§13.31]{WatsonBessel}):
\begin{equation}
	\label{eq:Bessel-Identity}
	J_0(\kappa)^2+2\sum_{n\ge 1}J_n(\kappa)^2=1.
\end{equation}

\textbf{Proof of (ii)}. 
Using the same reasoning as in (i), we can obtain $r_{\leq p}(\kappa)=\pi\sum_{n=0}^p \mathrm i^n a_n J_n(\kappa)$; hence $r(\kappa)-r_{\leq p}(\kappa)=\sum_{n>p} \mathrm i^n a_n J_n(\kappa)$. Accordingly,
\begin{equation}
	\label{eq:ep-derivation}
	\begin{split}
		|r(\kappa)-r_{\leq p}(\kappa)|
		&\le
		\pi\sum_{n>p}|a_n||J_n(\kappa)|
		\\
		&\le \pi
		(\sum_{n>p}|a_n|^2)^{1/2} \cdot (\sum_{n>p}J_n(\kappa)^2)^{1/2}
		\\
		& \le \frac{\pi}{\sqrt{2}}
		(\sum_{n>p}|a_n|^2)^{1/2}
	\end{split}
\end{equation}
where the second inequality uses the Cauchy--Schwarz inequality, and the last inequality is from \eqref{eq:Bessel-Identity}.

Further, for the residual, Proposition~\ref{prop:parseval} gives 
\[
\|g-g_{\leq p}\|_{w}^2
= \frac{\pi}{2}\sum_{n>p}|a_n|^2.
\]
Substituting this result into~\eqref{eq:ep-derivation} yields
\[
|r(\kappa) - r_{\le p}(\kappa)|
\le \sqrt{\pi}\, \|g - g_{\le p}\|_{w},
\]
which corresponds to~\eqref{eq:tail-function}. This completes the proof.
\hfill$\blacksquare$

\section{Proof of Lemma~\ref{lem:Affine-sub-space}}
\label{proof:affine}
	\noindent
		\text{1) Affine structure.}
		Let $f_1\in \mathcal V_r$ and $f_2\in \mathcal N$. For every $-M<m<M$, by linearity of the inner product in the first argument,
		\[
		\langle f_1+f_2, e^{i\kappa_m(\cdot)}\rangle_w
		=\langle f_1, e^{i\kappa_m(\cdot)}\rangle_w+\langle f_2, e^{i\kappa_m(\cdot)}\rangle_w
		=r_m,
		\]
		so $f_1+f_2\in \mathcal V_r$. Hence $\mathcal V_r+\mathcal N\subseteq \mathcal V_r$.
		Conversely, fix any $f_1\in \mathcal V_r$ and take any $f\in \mathcal V_r$. Then for each $-M<m<M$,
		\[
		\langle f-f_1, e^{i\kappa_m(\cdot)}\rangle_w
		=\langle f, e^{i\kappa_m(\cdot)}\rangle_w-\langle f_1, e^{i\kappa_m(\cdot)}\rangle_w
		=0,
		\]
		so $f-f_1\in \mathcal N$, i.e., $f\in f_1+\mathcal N\subseteq \mathcal V_r+\mathcal N$.
		Therefore $\mathcal V_r\subseteq \mathcal V_r+\mathcal N$, and we conclude
		\[
		\mathcal V_r+\mathcal N=\mathcal V_r,
		\]
		so $\mathcal V_r$ is an affine subset of $L^2_w(I)$ with direction $\mathcal N$.
		
		\noindent
		\text{2) Characterization of $\mathcal N_\perp$.}
		Define the finite-dimensional subspace
		\[
		\mathcal S:=\operatorname{span}\bigl(\,1,\ \{\cos(\kappa_m \,\cdot)\}_{m=1}^{M-1},\ \{\sin(\kappa_m \,\cdot)\}_{m=1}^{M-1}\bigr)\subset L^2_w(I).
		\]
		We claim that $\mathcal N=\mathcal S^\perp$. Indeed, if $f\in \mathcal N$, then in particular
		\[
		\langle f,1\rangle_w=\langle f,e^{i\kappa_0(\cdot)}\rangle_w=0,
		\]
		and for each $m=1,\dots,M-1$ we have $\langle f,e^{i\kappa_m(\cdot)}\rangle_w=0$ and
		$\langle f,e^{-i\kappa_m(\cdot)}\rangle_w=0$ (since the constraints hold for all $-M<m<M$).
		Using the identities
		\[
		\begin{split}
		\cos(\kappa_m x)&=\frac{e^{i\kappa_m x}+e^{-i\kappa_m x}}{2},\\
		\sin(\kappa_m x)&=\frac{e^{i\kappa_m x}-e^{-i\kappa_m x}}{2i},
		\end{split}
		\]
		and linearity of $\langle\cdot,\cdot\rangle_w$ in the second argument, we obtain
		\[
		\langle f,\cos(\kappa_m(\cdot))\rangle_w
		=\frac{1}{2}\big(\langle f,e^{i\kappa_m(\cdot)}\rangle_w+\langle f,e^{-i\kappa_m(\cdot)}\rangle_w\big)=0,
		\]
		\[
		\langle f,\sin(\kappa_m(\cdot))\rangle_w
		=\frac{1}{2i}\big(\langle f,e^{i\kappa_m(\cdot)}\rangle_w-\langle f,e^{-i\kappa_m(\cdot)}\rangle_w\big)=0.
		\]
		Hence $f$ is orthogonal to each generator of $\mathcal S$, i.e., $f\in \mathcal S^\perp$, so $\mathcal N\subseteq \mathcal S^\perp$.
		Conversely, if $f\in \mathcal S^\perp$, then $\langle f,1\rangle_w=0$ and for each $m=1,\dots,M-1$,
		$\langle f,\cos(\kappa_m(\cdot))\rangle_w=0$ and $\langle f,\sin(\kappa_m(\cdot))\rangle_w=0$; therefore
		\[
		\langle f,e^{\pm i\kappa_m(\cdot)}\rangle_w
		=\langle f,\cos(\kappa_m(\cdot))\rangle_w  \pm i\,\langle f,\sin(\kappa_m(\cdot))\rangle_w
		=0,
		\]
		and also $\langle f,e^{i\kappa_0(\cdot)}\rangle_w=\langle f,1\rangle_w=0$. This is exactly the defining condition of $\mathcal N$,
		so $\mathcal S^\perp\subseteq \mathcal N$. Hence $\mathcal N=\mathcal S^\perp$.
		
		Taking orthogonal complements yields
		\[
		\mathcal N_\perp=\mathcal N^\perp=(\mathcal S^\perp)^\perp=\overline{\mathcal S}.
		\]
		Since $\mathcal S$ is finite-dimensional, it is closed in $L^2_w(I)$, so $\overline{\mathcal S}=\mathcal S$ and thus $\mathcal N_\perp=\mathcal S$.
		Equivalently, every $f\in \mathcal N_\perp$ can be written as
		\[
		\begin{split}
			f(x)&=b_0+\sum_{m=1}^{M-1} b_m \cos(\kappa_m x)+ b_{M-1+m}\sin(\kappa_m x)
			\\
			&:=f(x;b),
		\end{split}
		\]
		for some $b\in\mathbb{R}^{2M-1}$, and conversely every such trigonometric polynomial belongs to $\mathcal N_\perp$.
		This proves Lemma~\ref{lem:Affine-sub-space} and the finite-dimensionality claim.
	\hfill$\blacksquare$

\section{Proof of Proposition~\ref{prop:PLV-equivalent}}
\label{proof:proof-PLV}
We show the equivalence of the four statements in
Proposition~\ref{prop:PLV-equivalent} by linking them pairwise.
Throughout the proof, we use the sets
$\mathcal{V}_{\bs{r}}, \mathcal{N}, \mathcal{N}_{\bot}$ defined in
Lemma~\ref{lem:Affine-sub-space}.

\textit{1) $\Leftrightarrow$ 2): minimum norm vs. affine intersection.}
By Lemma~\ref{lem:Affine-sub-space}, the feasible set
$\mathcal{V}_{\bs{r}}$ is an affine subset of $L^2_w(I)$ with
direction $\mathcal{N}$, i.e., $\mathcal{V}_{\bs{r}} + \mathcal{N}
= \mathcal{V}_{\bs{r}}$. In particular, if $g_0 \in \mathcal{V}_{\bs{r}}$
is any fixed feasible point, then every $\mathring{g} \in \mathcal{V}_{\bs{r}}$
can be written uniquely as
\[
  \mathring{g} = g_0 + h, \quad h \in \mathcal{N}.
\]

Assume first that statement~2) holds, i.e.,
$\mathcal{V}_{\bs{r}} \cap \mathcal{N}_{\bot} = \{g_{\mathrm{plv}}\}$.
For any $g \in \mathcal{V}_{\bs{r}}$, write
$\mathring{g} = g_{\mathrm{plv}} + h$ with $h \in \mathcal{N}$. Since
$g_{\mathrm{plv}} \in \mathcal{N}_{\bot}$ and $h \in \mathcal{N}$,
we have $\langle g_{\mathrm{plv}}, h \rangle_w = 0$ and hence
\[
  \|\mathring{g}\|_w^2
  = \|g_{\mathrm{plv}} + h\|_w^2
  = \|g_{\mathrm{plv}}\|_w^2 + \|h\|_w^2
  \ge \|g_{\mathrm{plv}}\|_w^2,
\]
with equality only if $h = 0$, i.e., $\mathring{g} = g_{\mathrm{plv}}$.
Thus $g_{\mathrm{plv}}$ is the unique minimum-norm element in
$\mathcal{V}_{\bs{r}}$, which proves statement~1).

Conversely, suppose $g_{\mathrm{plv}}$ is the unique solution of the
minimum-norm problem~\eqref{eq:plv-min-norm}. Decompose
$g_{\mathrm{plv}}$ orthogonally with respect to $\mathcal{N}$ as
\[
  g_{\mathrm{plv}} = g_{\perp} + h,
  \quad g_{\perp} \in \mathcal{N}_{\bot},\; h \in \mathcal{N}.
\]
Since $\mathcal{V}_{\bs{r}}$ is an affine set with direction
$\mathcal{N}$, we have $g_{\perp} = g_{\mathrm{plv}} - h \in
\mathcal{V}_{\bs{r}}$. Moreover,
\[
  \|g_{\perp}\|_w^2
  = \|g_{\mathrm{plv}} - h\|_w^2
  = \|g_{\mathrm{plv}}\|_w^2 - \|h\|_w^2
  \le \|g_{\mathrm{plv}}\|_w^2,
\]
with strict inequality whenever $h \ne 0$. By the optimality of
$g_{\mathrm{plv}}$, it must hold that $h = 0$, and hence
$g_{\mathrm{plv}} = g_{\perp} \in \mathcal{N}_{\bot}$. Combined with
$g_{\mathrm{plv}} \in \mathcal{V}_{\bs{r}}$, this implies
$g_{\mathrm{plv}} \in \mathcal{V}_{\bs{r}} \cap \mathcal{N}_{\bot}$.
Finally, if there were two distinct elements
$g_1, g_2 \in \mathcal{V}_{\bs{r}} \cap \mathcal{N}_{\bot}$, then
$g_1 - g_2 \in \mathcal{N} \cap \mathcal{N}_{\bot} = \{0\}$, so
$g_1 = g_2$. Hence $\mathcal{V}_{\bs{r}} \cap \mathcal{N}_{\bot}$ is
a singleton, and statement~2) holds. This proves $1) \Leftrightarrow 2)$.

\textit{2) $\Leftrightarrow$ 3): geometric formulation vs. trigonometric parameterization.}
Lemma~\ref{lem:Affine-sub-space} further shows that
$\mathcal{N}_{\bot}$ is a $(2M-1)$-dimensional subspace spanned by
the trigonometric basis
\[
  \Big\{ 1,\,
         \cos(\kappa_m x),\,
         \sin(\kappa_m x)
         \;\big|\;
         m = 1,\ldots,M-1 \Big\},
\]
so every $g \in \mathcal{N}_{\bot}$ admits a unique representation of
the form~\eqref{eq:plv-trig-form} with some
$\bs{b} \in \mathbb{R}^{2M-1}$. In particular, any
$g_{\mathrm{plv}} \in \mathcal{V}_{\bs{r}} \cap \mathcal{N}_{\bot}$
must be representable as in~\eqref{eq:plv-trig-form}, and the
constraints defining $\mathcal{V}_{\bs{r}}$ translate exactly into the
feasibility conditions~\eqref{eq:plv-trig-feas}. This yields
statement~3).

Conversely, if $g_{\mathrm{plv}}$ admits the representation
\eqref{eq:plv-trig-form} with a vector $\bs{b}$ that
satisfies~\eqref{eq:plv-trig-feas}, then by construction
$g_{\mathrm{plv}} \in \mathcal{N}_{\bot}$ and
$\langle g_{\mathrm{plv}}, e^{\ii\kappa_m(\cdot)} \rangle_w = r_m$
for all $-M < m < M$, i.e., $g_{\mathrm{plv}} \in \mathcal{V}_{\bs{r}}$.
Uniqueness of $\bs{b}$ (and hence of $g_{\mathrm{plv}}$) follows from
the linear independence of the trigonometric basis. Therefore
$\mathcal{V}_{\bs{r}} \cap \mathcal{N}_{\bot}
= \{g_{\mathrm{plv}}\}$, and statement~2) holds. This proves
$2) \Leftrightarrow 3)$.

\textit{3) $\Leftrightarrow$ 4): trigonometric form vs. closed-form coefficients.}
Substituting the expansion~\eqref{eq:plv-trig-form} into the
covariance constraints
$\langle g_{\mathrm{plv}}, e^{\ii\kappa_m(\cdot)} \rangle_w = r_m$,
$-M < m < M$, and separating real and imaginary parts yield a linear
system of the form
\begin{equation}
  \mathbf{G}\,\bs{b} = \pi \bs{y},
  \label{eq:plv-linear-system}
\end{equation}
where $\bs{y} \in \mathbb{R}^{2M-1}$ is defined
in~\eqref{eq:linear-reg} and $\mathbf{G}$ is the block-diagonal
Gram matrix in~\eqref{eq:plv-G-def}. More
precisely, the entries of $\mathbf{G}_{\Re}$ and $\mathbf{G}_{\Im}$
can be written as
\[
\begin{split}
  [\mathbf{G}_{\Re}]_{m,n}
  &= \langle \cos(\kappa_m \cdot), \cos(\kappa_n \cdot) \rangle_w,
  \\
  [\mathbf{G}_{\Im}]_{m-1,n-1}
  &= \langle \sin(\kappa_m \cdot), \sin(\kappa_n \cdot) \rangle_w.
  \end{split}
\]
For any $\bs{c} \in \mathbb{R}^{M}$, we have
\[
  \bs{c}^{\top} \mathbf{G}_{\Re} \bs{c}
  = \big\| \sum_{m=0}^{M-1} c_m \cos(\kappa_m \cdot) \big\|_w^2
  \ge 0,
\]
with equality if and only if
$\sum_{m=0}^{M-1} c_m \cos(\kappa_m x) = 0$ for all $x \in I$,
which by linear independence of the cosine functions implies
$\mathbf{c} = \mathbf{0}$. Hence $\mathbf{G}_{\Re}$ is symmetric
positive definite. An analogous argument shows that
$\mathbf{G}_{\Im}$ is symmetric positive definite. Thus
$\mathbf{G}$ is positive definite and invertible, and
\eqref{eq:plv-linear-system} has the unique solution
$\bs{b} = \pi \mathbf{G}^{-1} \mathbf{y}$, which gives the
closed-form expression~\eqref{eq:plv-closed-form}. This proves
statement~4) given statement~3).

Conversely, if $\bs{b} = \pi \mathbf{G}^{-1} \mathbf{y}$, then
\eqref{eq:plv-linear-system} holds, and thus 
$g_{\mathrm{plv}}(\cdot;\bs{b})$ of the
form~\eqref{eq:plv-trig-form} satisfies covariance constraints
and belongs to $\mathcal{N}_{\bot}$. This recovers statement~3),
and hence $3) \Leftrightarrow 4)$.

Combining the implications $1) \Leftrightarrow 2)$,
$2) \Leftrightarrow 3)$, and $3) \Leftrightarrow 4)$ yields the
equivalence of all four characterizations.
\hfill$\blacksquare$

\section{Proof of Lemma~\ref{lem:sdp_cheb}}
\label{appendix-pf-lemma-SDP}
The argument relies on the following classical result (see also~\cite{RohVandenberghe} and~\cite[Sec. 3.2]{krejn1977markov}).
\begin{proposition}
	\label{prop:Lukas}
(Markov--Luk\'acs Theorem~\cite[Theorem. 1.21.1]{szego2003orthogonal}).
Let $g_{\le p}(x)$ be a degree-$p$ Chebyshev polynomial on
$[-1,1]$ with odd $p$, and suppose that $g_{\le p}(x) \ge 0$ for
all $x \in [-1,1]$. Then $g_{\le p}$ admits a representation of
the form
\begin{equation}
  g_{\le p}(x)
  = (1+x) f_{\le \lfloor p/2 \rfloor}(x)^2
    + (1-x) h_{\le \lfloor p/2 \rfloor}(x)^2,
  \label{eq:ML}
\end{equation}
for some polynomials
$f_{\le \lfloor p/2 \rfloor}, h_{\le \lfloor p/2 \rfloor}$ of degree at most
$\lfloor p/2 \rfloor$.
\end{proposition}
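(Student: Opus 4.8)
The plan is to give a constructive proof based on real factorization together with a $\mathbb{Z}/2$-graded multiplicative structure on two families of nonnegative polynomials. Define the \emph{even family} $\mathcal{E} := \{\,p^2 + (1-x^2)q^2\,\}$ and the \emph{odd family} $\mathcal{O} := \{\,(1+x)f^2 + (1-x)h^2\,\}$. The heart of the argument is that each member of $\mathcal{E}$ is the squared modulus $|p + \mathrm{i}\sqrt{1-x^2}\,q|^2$ and each member of $\mathcal{O}$ is $|\sqrt{1+x}\,f + \mathrm{i}\sqrt{1-x}\,h|^2$; multiplying two such complex representatives and reading off real and imaginary parts yields the closed multiplication rules $\mathcal{E}\cdot\mathcal{E}\subseteq\mathcal{E}$, $\mathcal{E}\cdot\mathcal{O}\subseteq\mathcal{O}$, and $\mathcal{O}\cdot\mathcal{O}\subseteq\mathcal{E}$. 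First I would write these identities out explicitly (for instance $\mathcal{E}\cdot\mathcal{O}$ produces $(1+x)(pf-(1-x)qh)^2+(1-x)(ph+(1+x)qf)^2$), check that every surviving coefficient is a genuine polynomial (the odd powers of the formal square roots recombine via $\sqrt{1-x^2}\sqrt{1\pm x}=(1\pm x)\sqrt{1\mp x}$), and verify the degree budget. This is where the $\lfloor p/2\rfloor$ bound is born: an element of $\mathcal{O}$ of degree $d$ necessarily has both polynomials of degree $\le (d-1)/2$, and each product rule preserves exactly this relation.

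Next I would reduce an arbitrary nonnegative $g_{\le p}$ to a product of elementary factors and classify each into $\mathcal{E}$ or $\mathcal{O}$. Since $g_{\le p}\ge 0$ on $[-1,1]$, its real roots in the open interval $(-1,1)$ must have even multiplicity, so they group into perfect squares $(x-r)^2\in\mathcal{E}$; each conjugate pair of complex roots gives a positive-definite quadratic, which lies in $\mathcal{E}$ by an explicit completion of squares; and each real root at an endpoint or outside $[-1,1]$ contributes a linear factor that is sign-definite on $[-1,1]$. Writing such a factor as $\epsilon_s\ell_s$ with $\ell_s\ge 0$ on $[-1,1]$, a one-line computation places $\ell_s\in\mathcal{O}$ (e.g. $s-x=(1+x)a^2+(1-x)b^2$ with $a^2=(s-1)/2$, $b^2=(s+1)/2$ for $s\ge 1$). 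Evaluating $g_{\le p}$ at an interior non-root point, where all $\mathcal{E}$-factors and all $\ell_s$ are strictly positive, forces the product of the signs $\epsilon_s$ times the leading coefficient to be strictly positive, and I absorb this positive constant into a square.

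Finally I would assemble the pieces. All square, positive-quadratic, and absorbed-constant factors lie in $\mathcal{E}$, whereas the endpoint/exterior linear factors contribute $\sum_s m_s$ elements of $\mathcal{O}$; by the multiplication rules the entire product lands in $\mathcal{E}$ or $\mathcal{O}$ according to the parity of $\sum_s m_s$. Since the squares and positive quadratics contribute only even degree, this parity equals that of $p$, so the hypothesis that $p$ is odd forces $g_{\le p}\in\mathcal{O}$, which is precisely the representation~\eqref{eq:ML}; the degree bookkeeping of the first paragraph then delivers $\deg f,\deg h\le\lfloor p/2\rfloor$.

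The main obstacle I expect is the combined polynomiality-and-degree bookkeeping in the multiplication rules: one must confirm that the formal factors $\sqrt{1\pm x}$ and $\sqrt{1-x^2}$ recombine into honest polynomials in \emph{every} product and that the induced degree bounds stay tight enough to land exactly at $\lfloor p/2\rfloor$ rather than overshooting. A cleaner but less elementary fallback, should this bookkeeping become unwieldy, is to set $x=\cos\theta$ and invoke the Fejér--Riesz theorem on the resulting nonnegative cosine polynomial, extracting the $(1\pm\cos\theta)$ weights from the boundary behavior at $\theta=0,\pi$.
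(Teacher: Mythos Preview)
The paper does not prove Proposition~\ref{prop:Lukas} at all: it is quoted verbatim as the classical Markov--Luk\'acs theorem with a citation to Szeg\H{o}~\cite[Theorem~1.21.1]{szego2003orthogonal} and is then used as a black box inside the proof of Lemma~\ref{lem:sdp_cheb}. Your proposal therefore goes strictly beyond what the paper does, supplying a complete self-contained argument where the authors are content with a reference.

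That said, your route is sound and is essentially the classical Luk\'acs/Fej\'er argument. The $\mathbb{Z}/2$-graded structure you set up via $\mathcal{E}=\{p^2+(1-x^2)q^2\}$ and $\mathcal{O}=\{(1+x)f^2+(1-x)h^2\}$, together with the complex-modulus interpretation, is exactly the mechanism that makes the multiplication rules transparent and keeps the degree bounds tight. The only places to be careful are the ones you already flag: (i) in the $\mathcal{E}\cdot\mathcal{O}$ identity, checking that $\deg(pf-(1-x)qh)$ and $\deg(ph+(1+x)qf)$ both stay at $(d_{\mathcal{E}}+d_{\mathcal{O}}-1)/2$ requires tracking that $q$ has degree one less than $p$ in the $\mathcal{E}$-representation; and (ii) endpoint roots can have odd multiplicity, so they must be handled as genuine $\mathcal{O}$-factors rather than folded into squares. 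Your parity count via $\sum_s m_s\equiv p\pmod 2$ handles this correctly. The Fej\'er--Riesz fallback you mention (substitute $x=\cos\theta$ and factor the nonnegative trigonometric polynomial) is the other standard proof and is arguably cleaner on the degree bookkeeping, but your direct approach is more elementary and avoids importing another named theorem.
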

Our goal is to show that:
\(
g_{\le p}(\cdot;\bs a) \ge 0 \text{ on } I \iff \bs a \in \mathcal D.
\)

\emph{Orthonormal basis}.
Following Lemma~1, define the \emph{orthonormal} polynomials
$\{\psi_n\}_{n\ge 0}$ on $I$ with respect to $w$ by
\begin{equation}
  \psi_n(x) := \sqrt{\frac{2 - \delta_{n0}}{\pi}}\,T_n(x),\ n=0,1,\ldots,
  \label{eq:psi-def-app}
\end{equation}
so that $\{\psi_n\}$ is orthonormal in $L_w^2(I)$,
i.e., $\langle \psi_m,\psi_n\rangle_w = \delta_{mn}$. Then
$g_{\le p}(x)$ admits the expansion
\begin{equation}
  g_{\le p}(x)
  = \sum_{n=0}^p \zeta_n \psi_n(x),
  \label{eq:g-psi-exp}
\end{equation}
for some $\bs \zeta = (\zeta_0,\ldots,\zeta_p)^\top$.
Using \eqref{eq:psi-def-app} and the Chebyshev expansion
$g_{\le p}(x;a) = \sum_{n=0}^p a_n T_n(x)$, we have
\begin{equation}
  \bs \zeta = \sqrt{\tfrac{\pi}{2}}\, \bs a.
  \label{eq:zeta-a-relation}
\end{equation}
Thus it suffices to characterize nonnegativity in terms of
$\bs \zeta$.

{1) \emph{Sufficiency}: nonnegativity $\Rightarrow a \in \mathcal{D}$}.
Assume $g_{\le p}(x;a) \ge 0$ for all $x \in I$.
By Proposition~3, there exist polynomials
$f_{\le \lfloor p/2\rfloor}$, $h_{\le \lfloor p/2\rfloor}$ with degree up to
$\lfloor p/2 \rfloor$ such that
\begin{equation}
  g_{\le p}(x;a)
  = (1+x) f_{\le \lfloor p/2\rfloor}(x)^2
    + (1-x) h_{\le \lfloor p/2\rfloor}(x)^2.
  \label{eq:ML-2}
\end{equation}

Expand $f_{\le \lfloor p/2\rfloor}$,
$h_{\le \lfloor p/2\rfloor}$ in the orthonormal basis
$\{\psi_n\}$ as
\begin{align}
  f_{\le \lfloor p/2\rfloor}(x)
  &= \sum_{n=0}^{p_{\mathrm{half}}-1}  u_n \psi_n(x)
   = \bs u^\top \bs \psi(x),
   \label{eq:f-expansion} \\
  h_{\le \lfloor p/2\rfloor}(x)
  &= \sum_{n=0}^{p_{\mathrm{half}}-1} v_n \psi_n(x)
   = \bs v^\top \bs \psi(x),
   \label{eq:h-expansion}
\end{align}
where 
\(
  \bs \psi(x) :=
  [
    \psi_0(x) , \cdots , \psi_{p_{\mathrm{half}}-1}(x)
  ]^\top \in \mathbb R^{\half},
\)
and 
\[
  \bs u = [u_0,\ldots,u_{p_{\mathrm{half}}-1}]^\top,\quad
  \bs v = [v_0,\ldots,v_{p_{\mathrm{half}}-1}]^\top.
\]
Define the rank-one positive semidefinite matrices
\begin{equation}
  \bs S_1 := \bs u \bs u^\top,\quad \bs S_2 := \bs v \bs v^\top.
  \label{eq:S1S2-def}
\end{equation}
Then we obtain
\(
  f_{\le \lfloor p/2\rfloor}(x)^2
  = \bs \psi(x)^\top \bs S_1 \bs \psi(x),
  h_{\le \lfloor p/2\rfloor}(x)^2
  = \bs \psi(x)^\top \bs S_2 \bs \psi(x),
\)
and substituting these into \eqref{eq:ML-2} yields
\begin{equation}
	\begin{aligned}
		g_{\leq p}(x) 
		&= (1+x) \bm{\psi}(x)^\top \bs{S}_1 \bm{\psi}(x)
		\\
		&+ (1-x) \bm{\psi}(x)^\top \bs{S}_2 \bm{\psi}(x).
	\end{aligned}
	\label{eq:g-psi-S1S2}
\end{equation}

To relate $\bs S_1, \bs S_2$ to the coefficient vector $\bs a$, we follow
the construction in Definition~\ref{def:nonnegative-cone}. Let $\{\nu_i\}_{i=0}^N \subset I$
denote the Chebyshev zero nodes of $T_{N+1}$ (cf.
Definition~\ref{def:zeros-nodes}), and define the orthonormal DCT-II matrix
$\bs \Psi \in \mathbb{R}^{(N+1)\times (N+1)}$ by
\begin{equation}
  [\bs \Psi]_{j,n}
  := 
  \sqrt{c_N} \bs{\psi}_n(\nu_j),
  \quad 0 \le j,n \le N,
  \label{eq:DCT-def}
\end{equation}
where $c_N={\pi}/({N+1})$. \eqref{eq:DCT-def} is equivalent to the expression in Lemma~4 via \eqref{eq:psi-def-app}. As shown in~\cite[Lemma 3.47]{Plonka2023Chebyshev} (see also~\cite[Sec. 8.5]{briggs1995dft}),
$\bs \Psi$ is orthonormal, i.e., $\bs \Psi^{-1} = \bs \Psi^\top$.

For $N \ge p$, define the truncated matrices
\begin{subequations}
\label{eq:trunc-Psi}
\begin{align}
  \bs \Psi_{p_{\mathrm{half}}}
  &:= \bs \Psi(:,0{:}p_{\mathrm{half}}-1)
    \in \mathbb{R}^{(N+1)\times p_{\mathrm{half}}}, \\
  \bs \Psi_{p+1}
  &:= \bs \Psi(:,0{:}p)
    \in \mathbb{R}^{(N+1)\times (p+1)}.
\end{align}
\end{subequations}
Evaluating \eqref{eq:g-psi-S1S2} at 
$\{\nu_j\}_{j=0}^N$ and collecting them in vector form:
\[
  g_{\le p}(\bs \nu)
  :=
  [
    g_{\le p}(\nu_0) , \cdots , g_{\le p}(\nu_N)
  ]^\top.
\]
We obtain (cf.~\eqref{eq:g-psi-S1S2})
\begin{equation}
	\label{eq:g-nu-S1S2}
	\begin{aligned}
		c_N \cdot g_{\leq p}(\bs{\nu}) = &( \bs{1}_{} + \bm{\nu} )\circ \diag(\bs{\Psi}_{\half} \bs{S}_1 \bs{\Psi}_{\half}^\top ) \\
		+& ( \bs{1}_{} - \bm{\nu} ) \circ \diag(\bs{\Psi}_{\half} \bs{S}_2 \bs{\Psi}_{\half}^\top ),
	\end{aligned}
\end{equation}
where $\bs \nu := [\nu_0,\ldots,\nu_N]^\top$, “$\circ$” denotes the
Hadamard product, and $\operatorname{diag}(\cdot)$ extracts the
diagonal of a square matrix as a vector.

On other hand, combining the expansion
\eqref{eq:g-psi-exp} with the definition of $\bs \Psi$ in
\eqref{eq:DCT-def} yields the sampling relation
\begin{equation}
  \sqrt{c_N}\, \cdot 
  g_{\le p}(\bs \nu)
  = \bs \Psi_{p+1}\, \bs \zeta,
  \label{eq:g-nu-zeta}
\end{equation}
where $\bs \zeta$ is defined in
\eqref{eq:g-psi-exp}. Since $\bs \Psi$ is orthonormal, the columns
of $\bs \Psi_{p+1}$ are orthonormal as well, so that
$\bs \Psi_{p+1}^\top \bs \Psi_{p+1} = \bs I$. Hence, using~\eqref{eq:zeta-a-relation}, we may invert
\eqref{eq:g-nu-zeta} to obtain
\begin{equation}
  \bs a
  = \sqrt{2c_N/\pi}\, \cdot 
    \bs \Psi_{p+1}^\top g_{\le p}(\nu).
  \label{eq:zeta-from-g}
\end{equation}

Substituting \eqref{eq:g-nu-S1S2} into \eqref{eq:zeta-from-g}
and combining with \eqref{eq:zeta-a-relation} gives
\begin{equation}
\begin{aligned}
  \bs a
  &= \bs \Psi_{p+1}^\top
     \big[
       (\bs 1+\bs \nu)\circ
       \operatorname{diag}(\bs \Psi_{p_{\mathrm{half}}} \bs S_1 \bs \Psi_{p_{\mathrm{half}}}^\top)
       \\
  &\hphantom{=\Psi_{p+1}^\top\big[}
       +
       (\bs 1- \bs \nu)\circ
       \operatorname{diag}(\bs \Psi_{p_{\mathrm{half}}} \bs S_2 \bs \Psi_{p_{\mathrm{half}}}^\top)
     \big].
\end{aligned}
  \label{eq:a-beta-S1S2}
\end{equation}
Up to the constant rescaling of $\bs S_1$, $\bs S_2$ (absorbed into
their definitions), the right-hand side of
\eqref{eq:a-beta-S1S2} is exactly the linear map $\beta$ in
Definition~\ref{def:nonnegative-cone}.
Thus $\bs a \in \mathcal{D}$. We have shown that nonnegativity of
$g_{\le p}(\cdot; \bs a)$ implies $\bs a \in \mathcal{D}$.

2) \emph{Necessity:} $\bs a \in \mathcal{D}$ $\Rightarrow$ nonnegativity.
Conversely, suppose that $\bs a \in \mathcal{D}$, i.e., there exist
$\bs S_1,\bs S_2 \in \mathbb{S}_{p_{\mathrm{half}}}^+$ such that
$\bs a = \beta(\bs S_1, \bs S_2)$. By the explicit form of $\beta$ in
\eqref{eq:map-define}, this means that the polynomial
$g_{\le p}(x;\bs a)$ can be written in the form of~\eqref{eq:g-psi-S1S2}.

Factorize $\bs S_1$, $\bs S_2$ as
\(
  \bs S_1 = \sum_{k=1}^{s_1} \bs u_k \bs u_k^\top,
  \bs S_2 = \sum_{\ell=1}^{s_2} \bs v_\ell \bs v_\ell^\top,
\)
for some $\bs u_k, \bs v_\ell \in \mathbb{R}^{p_{\mathrm{half}}}$. Define
scalar functions
\[
  f_k(x) := \bs u_k^\top \bs \psi(x),\quad
  h_\ell(x) := \bs v_\ell^\top \bs \psi(x),
\]
for $k=1,\ldots,s_1$ and $\ell=1,\ldots,s_2$. Then
\[
  \bs \psi(x)^\top \bs S_1 \bs \psi(x)
  = \sum_{k=1}^{s_1} f_k(x)^2,\
  \bs \psi(x)^\top \bs S_2 \bs \psi(x)
  = \sum_{\ell=1}^{s_2} h_\ell(x)^2,
\]
and $g_{\le p}(x;\bs a)$ can be rewritten as: for $x \in I$,
\begin{equation}
  g_{\le p}(x; \bs a)
  =
  (1+x)\sum_{k=1}^{s_1} f_k(x)^2
  +
  (1-x)\sum_{\ell=1}^{s_2} h_\ell(x)^2.
  \label{eq:g-sos-final}
\end{equation}
For all $x \in I = [-1,1]$, we have $1\pm x \ge 0$ and each
term $f_k(x)^2$, $h_\ell(x)^2$ is nonnegative. Thus the
right-hand side of \eqref{eq:g-sos-final} is nonnegative, and we conclude that
$g_{\le p}(x;a) \ge 0$ on $I$.

\emph{Semidefinite representability}.
By \eqref{eq:D-def}, the set $\mathcal{D}$ is the image
of the cone
$\mathbb{S}_{p_{\mathrm{half}}}^+ \times \mathbb{S}_{p_{\mathrm{half}}}^+$
under the linear map $\beta$. Hence $D$ is convex and
semidefinite representable, and the constraint
$g_{\le p}(x; \bs a) \ge 0$ is equivalent to the
existence of $\bs S_1, \bs S_2 \in \mathbb{S}_{p_{\mathrm{half}}}^+$
satisfying $\bs a = \beta(\bs S_1, \bs S_2)$. This proves Lemma~\ref{lem:sdp_cheb}.
\hfill$\blacksquare$

\bibliographystyle{IEEEtran}
\bibliography{IEEEabrv,refs1}

\end{document}